\DeclareMathAlphabet{\altmathcal}{OMS}{cmsy}{m}{n}
\title{Petal-X: Human-Centered Visual Explanations to Improve Cardiovascular Risk Communication}
\author{%
  \authororcid{Diego Rojo}{0000-0003-2672-0738},
  \authororcid{Houda Lamqaddam}{0000-0001-6738-1934}, 
  \authororcid{Lucija Gosak}{0000-0002-8742-6594}, and 
  \authororcid{Katrien Verbert}{0000-0001-6699-7710} 
}
\abstract{%
  Cardiovascular diseases (CVDs), the leading cause of death worldwide, can be prevented in most cases through behavioral interventions. Therefore, effective communication of CVD risk and projected risk reduction by risk factor modification plays a crucial role in reducing CVD risk at the individual level. However, despite interest in refining risk estimation with improved prediction models such as SCORE2, the guidelines for presenting these risk estimations in clinical practice remained essentially unchanged in the last few years, with graphical score charts (GSCs) continuing to be one of the prevalent systems. This work describes the design and implementation of Petal-X, a novel tool to support clinician-patient shared decision-making by explaining the CVD risk contributions of different factors and facilitating what-if analysis. Petal-X relies on a novel visualization, Petal Product Plots, and a tailor-made global surrogate model of SCORE2, whose fidelity is comparable to that of the GSCs used in clinical practice. We evaluated Petal-X compared to GSCs in a controlled experiment with 88 healthcare students, all but one with experience with chronic patients. The results show that Petal-X outperforms GSC in critical tasks, such as comparing the contribution to the patient's 10-year CVD risk of each modifiable risk factor, without a significant loss of perceived transparency, trust, or intent to use. Our study provides an innovative approach to the visualization and explanation of risk in clinical practice that, due to its model-agnostic nature, could continue to support next-generation artificial intelligence risk assessment models.
}
\setlist{topsep=0pt}
\newtheorem{proposition}{Proposition}
\newtheorem{corollary}{Corollary}
\definecolor{redrisk}{HTML}{ad1e28}
\definecolor{orangerisk}{HTML}{f79228}
\definecolor{greenrisk}{HTML}{55bd92}
\begin{document}


\firstsection{Introduction}

\maketitle

Clinical risk communication is the process by which healthcare professionals engage with patients to discuss the risks and benefits of treatments or behavioral (risk-reducing) changes~\cite{edwards1999}. As the demand for shared clinical decision-making continues to grow, tools that effectively facilitate risk communication by healthcare professionals are becoming increasingly important~\cite{zipkin2014, alston2012}.

Central to such risk communication tools are clinical prediction models, which have progressed from simple scoring systems based on the presence or absence of risk factors to more sophisticated machine learning (ML) algorithms~\cite{grobman2006}. However, achieving a balance between predictive performance and model interpretability is crucial for these ML models to be useful in risk communication. Specifically, interpretability is necessary for domain experts to trust ML models and confidently incorporate them into their practices~\cite{zhou2018}.

The field of explainable artificial intelligence (XAI) aims to develop ML systems that have high predictive performance yet are comprehensible to a given audience. The two main approaches are (1) using inherently interpretable models, i.e., models that are transparent by design, and (2) applying post-hoc techniques to explain an existing ML model~\cite{barredo2020}. Although risk prediction models commonly used in clinical practice are developed using logistic regression or Cox regression and thus inherently interpretable~\cite{Moons2015}, even these transparent models may require post-hoc visualizations for audiences with little ML expertise, such as healthcare professionals or their patients, to understand them~\cite{barredo2020}.

However, a recent review of visual analytics tools for XAI in healthcare highlighted the need for visual explanations of clinical prediction models aimed at lay audiences~\cite{Ooge2022}. In this paper, we propose and evaluate Petal-X, a tool tailored to lay audiences that provides post-hoc explanations of SCORE2 (the European state-of-the-art cardiovascular disease risk prediction model~\cite{SCORE22021}) using a novel visual representation, Petal Product Plots.

The paper is structured as follows. In Section \ref{section:predviz}, we characterize the cardiovascular risk domain. This is followed by Sections \ref{section:surrogate} and \ref{section:ppp}, which describe the design process of the two main building blocks of Petal-X: a tailor-made SCORE2 global surrogate model and Petal Product Plots. Section \ref{section:petalx} shows how they are both integrated into Petal-X visual explanations. Finally, in Section \ref{section:evaluation}, we describe the Petal-X evaluation, the results of which are discussed in Section \ref{section:discussion}.


\section{Predicting and visualizing cardiovascular risk} \label{section:predviz}
Cardiovascular diseases (CVDs), such as stroke and coronary heart disease, are the most common non-communicable diseases worldwide, accounting for approximately 18.6 million deaths in 2019~\cite{SCORE22021}. Guidelines for CVD prevention advocate the use of CVD risk predictions to identify those patients at the highest risk of CVD and to support shared decision-making by the patient and their healthcare professional \cite{cvdguidelines2016, cvdguidelines2021}. Multiple CVD risk prediction models have been developed to achieve this goal, focusing on different populations or patient groups (e.g., apparently healthy people or patients with type 2 diabetes mellitus)\cite{Rosello2020, cvdguidelines2021}. Furthermore, some of these risk prediction models have also been made available in clinical practice through various tools that simplify the estimation and communication of cardiovascular prognoses.

In the remainder of this section, we first introduce SCORE2, the model that we use to estimate CVD risk, then analyze the visualizations used by existing tools to communicate CVD risk in clinical settings, and finally characterize the challenges that our proposed tool should support.

\subsection{SCORE2 CVD risk model} 
Systemic Coronary Risk Estimation 2 (SCORE2) is a prediction model to estimate the risk of 10-year fatal and non-fatal cardiovascular disease in individuals without previous CVD or diabetes aged 40 to 69~\cite{SCORE22021}. SCORE2 has sex-specific model coefficients and has been calibrated for four distinct European regions characterized by varying CVD risk levels (low, moderate, high risk and very high risk).

We opt for this model, as it is the standard approach suggested to communicate the risk of CVD in apparently healthy people by the 2021 European Society of Cardiology (ESC) guidelines on cardiovascular disease prevention in clinical practice~\cite{cvdguidelines2021}, thus adapting to the European context in which the evaluation of this research was carried out. Moreover, SCORE2 is a nonproprietary model whose implementation details have been published by the SCORE2 Working Group~\cite{SCORE22021}, allowing us not only to query it, but also to know the predictors it uses and to understand how they are mapped to a risk estimate. 

In particular, the SCORE2 model uses five predictors: age, smoking status (yes/no), systolic blood pressure, total cholesterol, and high-density lipoprotein (HDL) cholesterol. These predictors are mapped to a 10-year CVD risk percentage using the following formula:
\begin{equation}\label{eq:score2}
1-\exp\left(-\exp \left(\text{S}_{1}+\text{S}_{2} \times \ln \left(-\ln \left(\lambda^{\exp \left(\sum \beta\left(x-x_{\text{cen}}\right)\right)}\right)\right)\right)\right)
\end{equation}
where $\text{S}_{1}$, $\text{S}_{2}$ are region- and sex-specific scaling factors, $\lambda$ is the sex-specific baseline survival, $\beta$ are sex-specific model coefficients, $x$ are the values of the patient's risk factors together with their age-interactions, and $x_{\text{cen}}$ are the values at which each risk factor was centered. The concrete values of the parameters and coefficients for each sex and risk region can be found in the supplementary material provided by the SCORE2 Working Group~\cite{SCORE22021}.

\subsection{CVD risk visualization in clinical practice} 
In addition to a reliable patient estimate of CVD risk based on models such as SCORE2, the reduction of individual CVD risk requires effective communication of both individual risk and expected reduction of risk through risk factor treatments~\cite{cvdguidelines2021}. Although risk communication is complex and there is no one-size-fits-all communication strategy, visual representation of risk can improve patient understanding and satisfaction~\cite{cvdguidelines2021, zipkin2014, spiegelhalter2017}. 

In clinical settings, two graphical representations are predominant for presenting risk prediction models: graphical score charts (GSCs) and nomograms~\cite{bonett2019}. In addition to facilitating communication, both tools allow one to calculate the output of the risk prediction model for new individuals or data. Nomograms are graphical representations of the mathematical formula of the original model. In contrast, graphical score charts are simplified representations that just approximate the underlying model but are easier to understand and use~\cite{Moons2015, bonett2019}. Consequently, Bonnet~et~al.~\cite{bonett2019} recommend graphical score charts for healthcare professionals and patients but consider that nomograms are only suitable for healthcare professionals.

Increasingly, websites and mobile apps are also being used to calculate and communicate risk predictions, generally through interactive visualization. Similarly to nomograms, these digital tools rely on the more complex original model to calculate the risk predictions. Despite this, they can be aimed at healthcare professionals and the general public as GSCs since the model is generally hidden from the end user~\cite{bonett2019}. However, by hiding the model and solely providing the risk outcome, the model functioning becomes unclear for the end user, i.e., the model becomes a black box. 

In the following, we analyze the concrete visualizations used in the main tools suggested to communicate CVD risk in the 2021 ESC CVD guidelines~\cite{cvdguidelines2021}: the SCORE2 graphical score charts provided by the ESC, \href{https://www.heartscore.org}{heartscore.org} and \href{https://u-prevent.com}{u-prevent.com} websites, and the ESC CVD Risk Calculation mobile application. 

\subsubsection{SCORE2 graphical score charts}
\begin{figure}[h!t]
\centering
\includegraphics[width=0.9\linewidth]{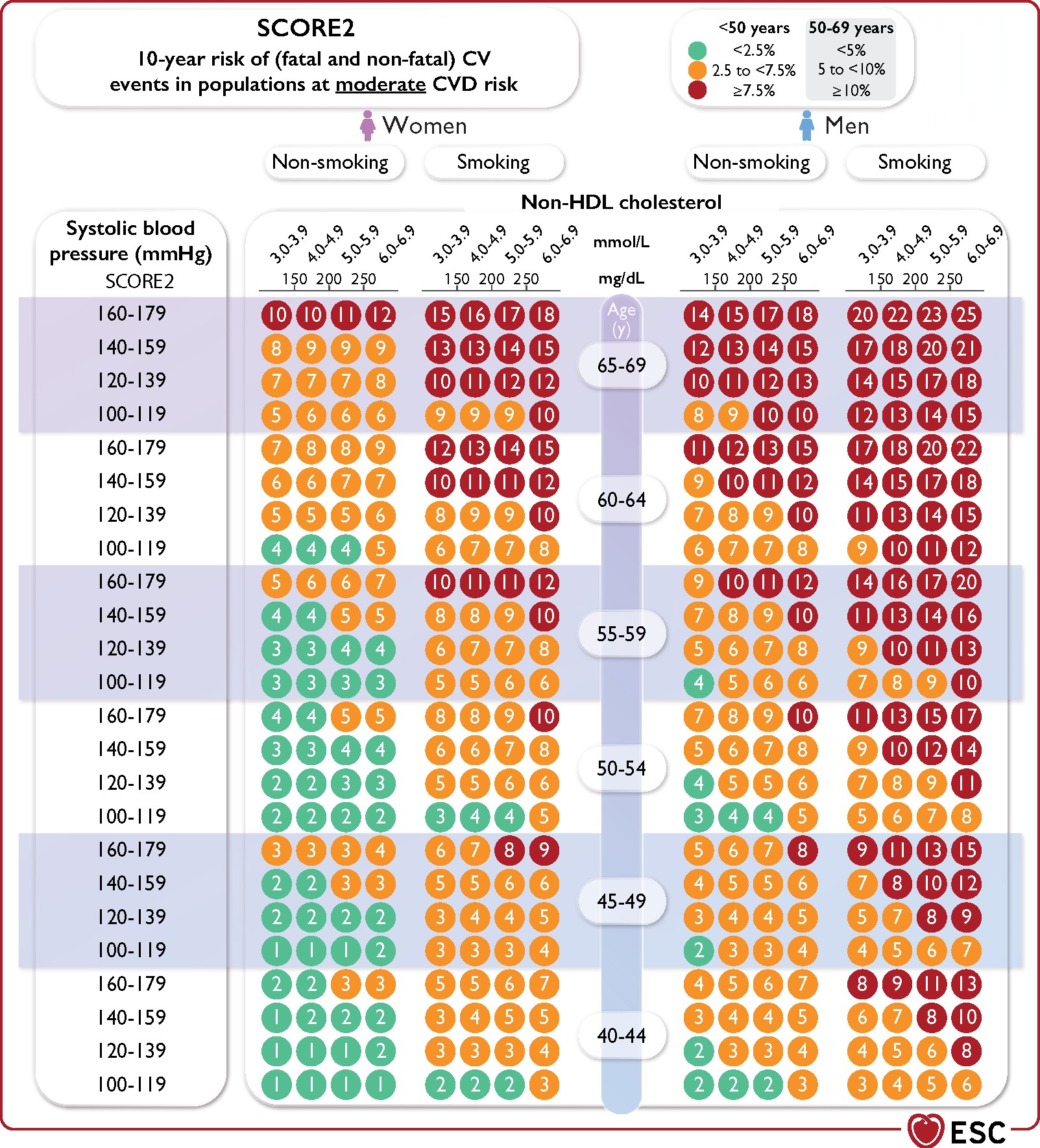}
\caption{SCORE2 graphical score chart for populations at moderate CVD risk. Adapted from~\cite{cvdguidelines2021}, modified.}
\label{fig:gsc}
\end{figure}

Graphical score charts (GSCs) are derived by cross-tabulating the model predictors, with each of them recoded into several categories. The estimated risk for a person with values of each predictor at the midpoint of the predictor's range is encoded in each cell using color. The numeric value of the estimated risk is sometimes also included in the cell. Graphical score charts usually require some simplification of the clinical prediction model as they can only accommodate a limited number of predictors and require continuous predictors to be presented as categories.

In the case of the SCORE2 GSCs (see \autoref{fig:gsc}), the number of predictors of the original SCORE2 model is reduced to four by combining total cholesterol and HDL-cholesterol into one unique predictor, non-HDL cholesterol. Furthermore, the continuous predictors are categorized into $5$ years intervals (age), $20\;mmHG$ intervals (systolic blood pressure), and $1\;mmol\,/\,L$ intervals (non-HDL cholesterol). Finally, risk estimates are encoded in each cell using the (rounded) numerical value and an age-dependent color scale.

\subsubsection{Websites and mobile apps visualizations} 
The analyzed websites and mobile applications visualize the patient's SCORE2 10-year CVD risk once the user inputs the patient's clinical data using various types of visuals, such as gauge charts and bar charts. At the time of writing this, \href{https://www.heartscore.org}{heartscore.org} is still updating its system to the 2021 ESC CVD guidelines~\cite{cvdguidelines2021}, so screenshots from \autoref{fig:examples}a,b correspond to the SCORE-based version.

\begin{figure}[h!t]
     \centering
     \begin{subfigure}[t]{0.37\linewidth}
         \centering
         \includegraphics[width=\textwidth]{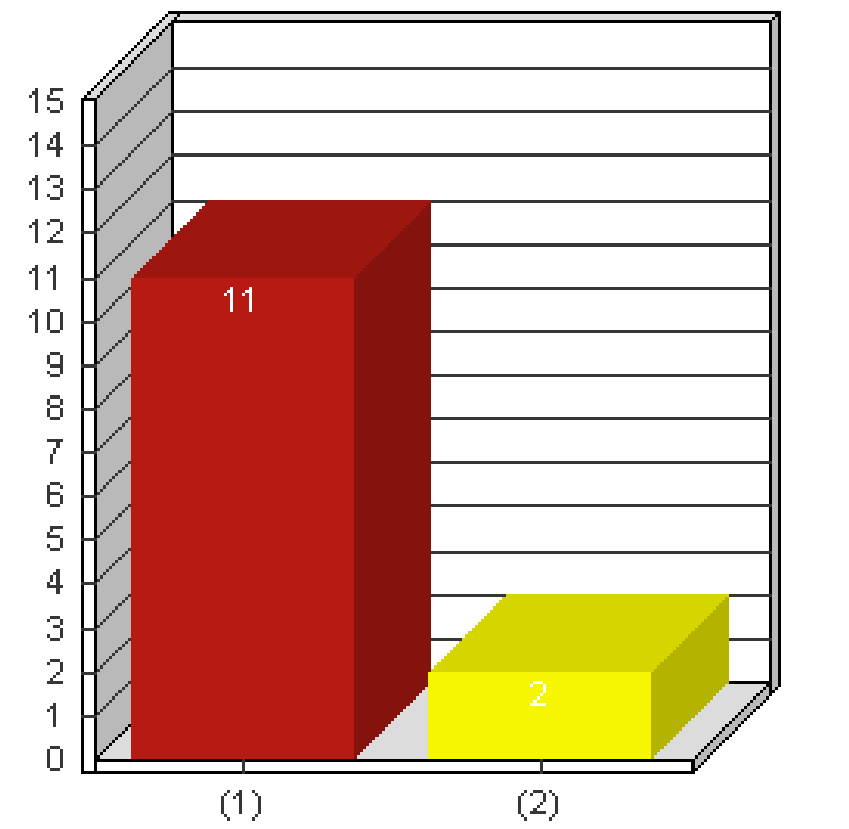}
         \caption{Bar chart}
         \label{fig:heartbar}
     \end{subfigure}
     \hfill
     \begin{subfigure}[t]{0.5\linewidth}
         \centering
         \includegraphics[width=\textwidth]{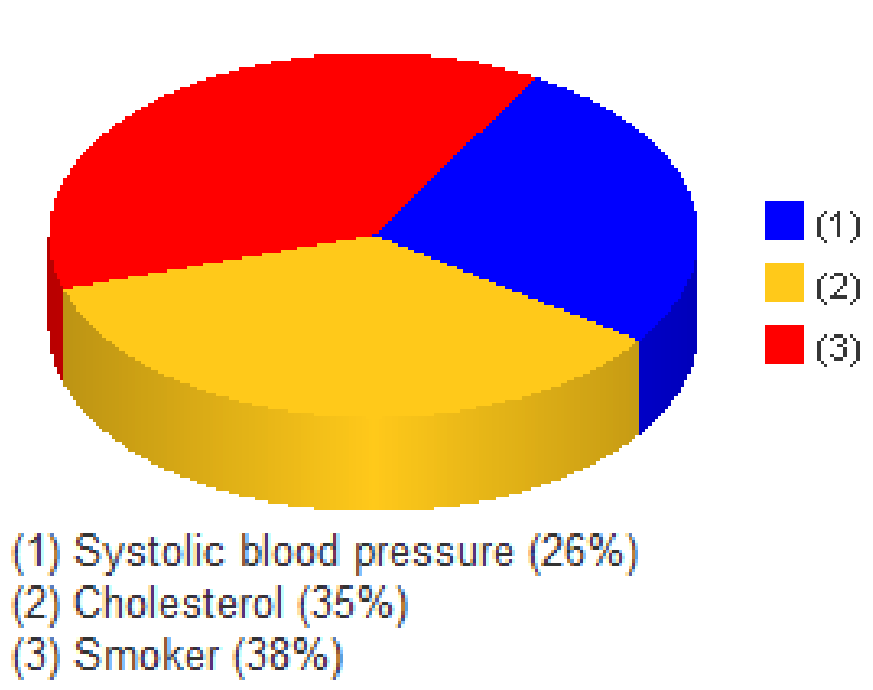}
         \caption{Factor contribution pie chart}
         \label{fig:heartpie}
     \end{subfigure}
     \begin{subfigure}[t]{0.45\linewidth}
         \centering
         \includegraphics[width=\textwidth]{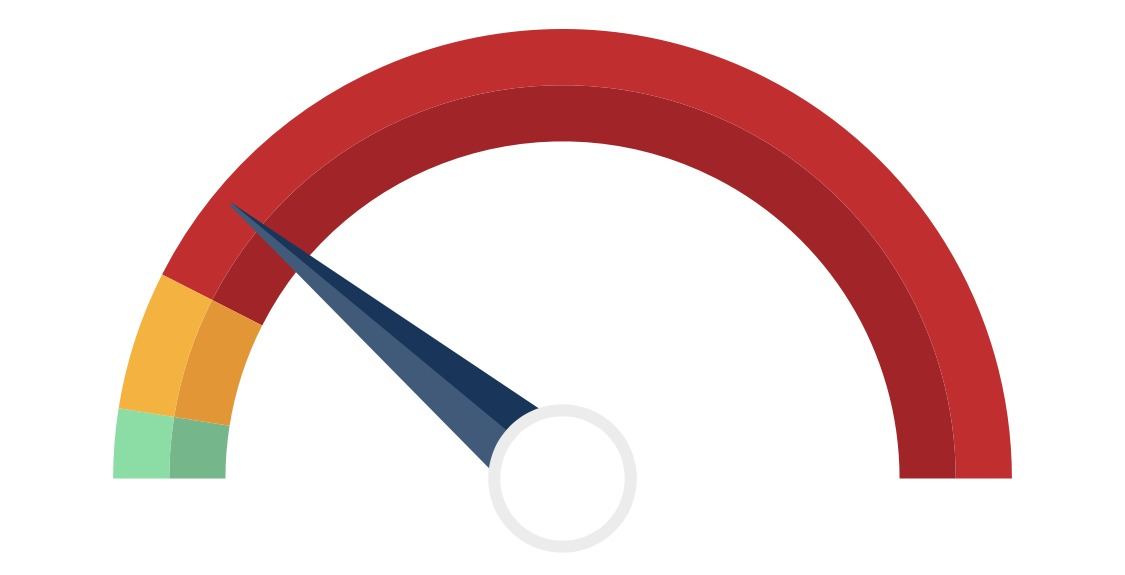}
         \caption{Gauge Chart}
         \label{fig:appgauge}
     \end{subfigure}
     \hfill
     \begin{subfigure}[t]{0.5\linewidth}
         \centering
         \includegraphics[width=\textwidth]{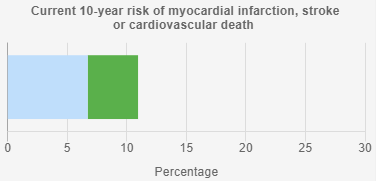}
         \caption{Bar chart with risk reduction}
         \label{fig:ubar}
     \end{subfigure}
        \caption{CVD risk visualizations screenshots from \href{https://www.heartscore.org}{heartscore.org} (a, b), ESC CVD Risk Calculation app (c) and \href{https://u-prevent.com}{u-prevent.com}~(d).}
        \label{fig:examples}
\end{figure}

Unlike SCORE2 GSCs, these tools do not provide a global explanation of the model. However, both websites allow users to visualize the projected risk reduction in the patient's 10-year CVD risk if specific intervention goals are achieved (see \autoref{fig:examples}a,d). In addition,  \href{https://www.heartscore.org}{heartscore.org} also provides a 3D pie chart that shows the relative contribution of each modifiable risk factor to the total 10-year CVD risk (see \autoref{fig:heartpie}).

\subsection{Domain problem characterization} \label{sec:challenges}
The analysis of current CVD risk communication tools and discussions with healthcare professionals allowed us to derive the following challenges that Petal-X should support for efficient clinician-patient CVD risk communication.

\begin{itemize}[noitemsep]
    \item[\textbf{C0}.]\textbf{Identify CVD risk.} Users should be able to identify the patient's 10-year CVD risk. 
    
    \item[\textbf{C1}.]\textbf{Understand the contribution of each risk factor.} Users should be able to understand the contribution of each risk factor to the total 10-year CVD risk. 

    \item[\textbf{C2}.]\textbf{Perform \textit{what-if} analysis.} Users should be able to comprehend \textit{what} the model output risk reduction would be \textit{if} the risk factor values of the patient were changed.
\end{itemize}

\section{SCORE2 surrogate model} \label{section:surrogate}
To allow patients or healthcare professionals to address the three challenges described in Section~\ref{sec:challenges}, the 10-year CVD risk percentage output given by the SCORE2 model is insufficient; more details of the model are needed to make its functioning clear and comprehensible to the target audience.

Despite SCORE2 being considered an (inherently) interpretable model under classifications such as that of Doran~et~al.~\cite{Doran2018} due to the availability of the mathematical mapping shown in \autoref{eq:score2}, it is clear that this formula does not make the functioning of SCORE2 clearer or more comprehensible to a lay audience. The main obstacle to the comprehensibility of SCORE2 is the number of non-linear functions applied to the otherwise simple linear combination of centered risk factors, i.e. $\sum \beta\left(x-x_{\text{cen}}\right)$ in \autoref{eq:score2}. In particular, of the three levels of transparency described by Lipton~\cite{lipton2018}, this has a major impact on the simulability of the SCORE2 model, as these non-linearities greatly increase the time, human cognition, and mathematical background required to produce a prediction from the input values and model parameters.

Therefore, post-hoc explanations are needed to describe the functioning of the SCORE2 model to patients. In fact, although generally not described as a post-hoc explanation method in the healthcare domain \cite{bonett2019}, current communication tools such as SCORE2 graphical score charts are indeed a global surrogate model, i.e., a more comprehensible model that approximates the more complex model.

In the remainder of this section, we discuss the design and development of our post-hoc explanation of the SCORE2 model, describe the dataset that was used to train it, and validate the developed surrogate models by comparing their fidelity to that of the SCORE2 graphical score charts.

\subsection{Global surrogate model}
As we have already noted that the SCORE2 model is not suitable to be explained directly to the target audience, the main decision left to navigate the XAI taxonomy proposed by Vijay~et~al.~\cite{vijay2021} is whether to use global or local post-hoc explanations to understand the SCORE2 model. We opt for global surrogate models, as the same validation process that the SCORE2 model underwent could be applied directly to validate them, a requirement before any auxiliary method for explaining the SCORE2 model is widely adopted in clinical practice~\cite{bonett2019}. 

We use a linear regression as the surrogate model, since its direct interpretation is closely aligned with challenges C1 and C2. In particular, for each sex-specific SCORE2 model, our linear regression surrogate takes the form
\begin{equation}\label{eq:surrogate}
\hspace*{-5pt}y = \alpha_{age} \cdot z_{age} + \alpha_{sbp} \cdot z_{sbp} +  \alpha_{smoking} \cdot z_{smoking} +  \alpha_{nonhdl} \cdot z_{nonhdl}
\end{equation}
where $y$ corresponds to the 10-year CVD risk predictions of the SCORE2 model, $z_{var}$ are the min-max normalized independent variables and $\alpha_{var}$ its corresponding coefficients. 

To increase the interpretability of the linear surrogate model, three key design decisions were taken:
\begin{enumerate}
    \item \textbf{Non-HDL cholesterol.} Likewise to SCORE2 graphical score charts, we combined total cholesterol and HDL-cholesterol into one unique input variable: non-HDL cholesterol (\textit{nonhdl} in \autoref{eq:surrogate}). This reduces the number of features in the explanation, which increases its simulatability, and evens out the direction of all predictors' effects (HDL cholesterol coefficient has reversed sign in the original SCORE2 model). Furthermore, non-HDL cholesterol is considered a reasonable alternative treatment goal for all patients \cite{cvdguidelines2021}.
    
    \item \textbf{Min-max normalization.} Continuous input variables (age, systolic blood pressure, and non-HDL cholesterol) have been normalized using the min-max method, i.e., 
     \[
        z = \frac {x-\min}{\max-\min} 
     \]
    where, for each variable, $\min$ and $\max$ are the corresponding values from \autoref{tab:rangetable}. The use of this normalization allows for an easier interpretation of the linear regression coefficient. In particular, the linear regression coefficient $\alpha_{var}$ can be interpreted as the estimated risk increase when going from the most optimal value considered by our model (the minimum or `no' in the case of $smoking$) to the worst value considered by our model (the maximum or `yes' in the case of $smoking$). 
    
    \item \textbf{No-intercept linear regression.} In our effort to increase the interpretability for our lay audience, we considered using a no-intercept linear regression, which removes the need to interpret an additional parameter. Using no-intercept linear regressions is generally justified based on the existence of a priori reasons to believe the intercept is zero~\cite{eisenhauer2003}. This is not our case, as the moderate risk region SCORE2 10-year CVD risk of a patient with a zero value for each normalized input variable $z_{var}$, i.e., a non-smoking individual of 45 years old with $100\;mmHg$ systolic blood pressure and $3\;mmol\,/\,L$ non-hdl cholesterol, is $0.7\%$~(female) and $1.3\%$~(male), not zero. However, training the linear regression models with intercept led, in our case, to negative intercepts which  (1) clearly undermines the interpretability of the models (what does a negative 10-year CVD risk mean?) and (2) is a worse point estimate than a zero intercept when compared to the actual SCORE2 values ($0.7\%$ and  $1.3\%$). Given these points, we decided to use a no-intercept linear regression model. 
\end{enumerate}

\begin{table}[h!t]
    \centering
        \caption{SCORE2 risk factor ranges. Ranges are derived from ESC SCORE2 graphical score charts and Risk Calculation App valid ranges.}
    \begin{tabular}{l|r|r}
         Risk Factor & Min & Max \\
         \toprule
         Age ($years$)\textsuperscript{*} & $45$ & $70$ \\
         Systolic Blood Pressure ($mmHg$) & $100$ & $180$ \\
         Non-HDL Cholesterol ($mmol\,/\,L$)& $3$ & $7$ \\
         Total Cholesterol ($mmol\,/\,L$)& $3$ & $9$ \\
         HDL Cholesterol ($mmol\,/\,L$)& $0.7$ & $2.5$ \\
    \end{tabular}
    \vspace{-2mm}
    \begin{minipage}{0.95\linewidth}
    \textsuperscript{*}\textit{Minimum age derived from ESC is 40 years, but we raised it to 45 due to the lack of patients from 40 to 45 in our dataset.}
    \end{minipage}
    \label{tab:rangetable}
\end{table}
\subsection{Framingham Heart Study dataset} \label{dataset}
The patient data we used to design and develop the visualization to communicate CVD risk comes from the Framingham Heart Study `teaching' dataset provided by the Biologic Specimen and Data Repository Information Coordinating Center~\cite{giffen2015}. This version is an anonymized subset version of the data originally collected from the Framingham Heart Study, ensuring the protection of patient confidentiality. The dataset contains data from 4434 individuals, including laboratory, clinic, questionnaire, and information on the following events: angina pectoris, myocardial infarction, atherothrombotic, infarction, cerebral hemorrhage (stroke), or death. Participants were followed for a total of 24 years and clinic data was collected throughout three assessment periods.

To be able to use the 10-year CVD risk predictions of the SCORE2 model, we selected the subset of individuals available during the third period, as it is the only period in which HDL cholesterol was measured in addition to total cholesterol. Of the 3263 individuals for whom clinical data are available during the third period, we filtered those with previous CVD or diabetes (596 individuals), as they are not the target population of the SCORE2 model. Finally, we removed those with missing values or values outside the range indicated in \autoref{tab:rangetable} for at least one of the inputs required by the SCORE2 model, ending with a total of 1762 individuals (713 male and 1049 female).

The individuals in the dataset are located in Framingham, Massachusetts (United States of America). Since the WHO age-standardized CVD mortality rate per 100,000 population in the United States of America for 2019 is $128.1$~\cite{world2019}, the corresponding SCORE2 CVD region risk level is \textit{moderate} (100 to 150 CVD deaths per 100,000\cite{SCORE22021}).
Therefore, for the remainder of the text, we will refer to the SCORE2 model for populations at moderate CVD risk simply as the SCORE2 model.

\subsection{Validation}\label{validation}
Training the no-intercept linear regression shown in \autoref{eq:surrogate} with patient data described in Section~\ref{dataset} and the output of the sex-specific SCORE2 models for the moderate risk region leads to the following two global surrogates models:
\begin{subequations}\label{eq:models}
    \begin{flalign}
     &\hspace*{-11pt}\text{\textit{Male Global Surrogate}} \nonumber \\
     &\hspace*{-11pt}y = 0.087 \cdot z_{age} + 0.058 \cdot z_{sbp} +  0.037 \cdot z_{smoking} +  0.022 \cdot z_{nonhdl} \label{seq:male} \\[5pt] 
     &\hspace*{-11pt}\text{\textit{Female Global Surrogate}} \nonumber \\
     &\hspace*{-11pt}y = 0.060 \cdot z_{age} + 0.037 \cdot z_{sbp} +  0.025 \cdot z_{smoking} +  0.004 \cdot z_{nonhdl} \label{seq:female}
    \end{flalign}
\end{subequations}

To ensure that the design decisions made led to a good trade-off between fidelity and interpretability, we evaluated the fidelity of these two surrogate models to the original SCORE2 models. Reporting on the fidelity of the surrogate model is also one of the items on the checklist for the assessment of medical AI studies provided by Cabitza \& Campagner~\cite{cabitza2021}.

There are two major decisions to consider when evaluating fidelity: selecting a fidelity metric and deciding the fidelity required for a surrogate to be deemed reliable. 
We use the Spearman's correlation (Sp. Corr.) between the predictions of the original model and the global surrogate as it has been proposed as the most appropriate fidelity metric~\cite{schwartzenberg2020}. 
To determine whether the models' fidelity is sufficient, we compared the fidelity metrics of our no-intercept linear regression surrogate models to those of the SCORE2 graphical score charts on the dataset described in Section~\ref{dataset}.

The results showed that the Spearman's correlation of our male (\mbox{Sp. Corr. $= 0.969$}) and female (\mbox{Sp. Corr. $= 0.964$}) linear regression global surrogate models is slightly higher than that of the male (\mbox{Sp. Corr. $= 0.947$}) and female (\mbox{Sp. Corr. $= 0.949$}) SCORE2 graphical score chart. Therefore, the fidelity of both linear regression surrogate models was deemed adequate.
\section{Petal Product Plots} \label{section:ppp}
In this section, we describe the design process and characterize a novel visualization, Petal Product Plots (PPPs). PPPs visually represent the dot product of two non-negative $n$-dimensional vectors $\mathbf{b}$ and $\mathbf{z}$, i.e.,
\begin{equation}\label{eq:dotproduct}
\mathbf{b} \cdot \mathbf{z}=  \sum_{i=1}^ {n}  b_ {i} z_ {i}  =  b_{1} z_{1} + b_{2} z_{2} + \cdots + b_{n} z_{n}\,, \quad b_i \geq 0,\, z_i \geq 0\,.
\end{equation}
The prior data abstraction clearly represents the two no-intercept linear regression models shown in \autoref{eq:models}, where $\mathbf{b} \cdot \mathbf{z}$ is the model output, $b_i$ is the model coefficient for each risk factor, and $z_i$ is the patient's normalized value for the same risk factor.

\subsection{Motivation and design goals}
The design of Petal Product Plots was inspired by the work of Jeong~et~al.~\cite{jeong2014}, which proposes the use of a novel weighted radar chart to communicate metabolic syndrome risk factors’ importance and patient values. Weighted radar charts are a variation of radar charts in which the angles between the different dimension axes are not equal but instead encode a given weight. In the metabolic syndrome use case, each risk factor value is mapped to a dimension axis of the radar chart, and the importance of each risk factor is encoded as an angle between the corresponding risk factor axis and the next anticlockwise axis. One limitation of this design discussed by Jeong~et~al. is that the resulting shape area, which is used as a metabolic syndrome risk quantification, is heavily dependent on the order of the axes. Moreover, it is challenging to identify which weight corresponds to each risk factor, as it is encoded between two risk factor axes.

We drew inspiration from two existing works when designing our solution to these limitations. The first was the characterization of product plots by Hadley \& Hoffmann \cite{hadley2011}. Product plots are visualizations in which the area of the graphical elements is proportional to an underlying attribute product of two other encoded values. The second was a study conducted by Albo et al. \cite{albo2016} that looked at user preferences between radar charts and a flower-based visualization. Their findings revealed a strong preference for flower-based visualization, which was found to be more enjoyable, clear and understandable, easy to use, and easy to learn.

In an effort to design a visualization that combines the appeal of the flower-based visualization and the encoding properties of product plots, we derived the following design goals.
\begin{itemize}[noitemsep]
    \item[\textbf{G1}.]\textbf{Resemble a flower.} The visualization consists of $n$ equally shaped graphical primitive elements that we refer to as \textit{petals}, each representing a summand $b_ {i} z_ {i}$. The petals are arranged by rotating them around a point (the ``flower'' center), fully covering the 360 degrees.
    \item[\textbf{G2}.]\textbf{Petal area is proportional to the $b_i z_i$ product.} The area of $i$-th petal should be proportional to $b_i z_i$. 
    \item[\textbf{G3}.] \textbf{Non-overlapping petals.} As in the product plots constraints established by Hadley \& Hoffmann \cite{hadley2011}, we require the petals to be disjoint so that the entire area is observable. 
    \item[\textbf{G4}.]\textbf{The petal angle encodes the $\bm{b_i}$ value.} The need to have non-overlapping petals fully covering the 360 degrees (or $2\pi$ radians) to resemble a flower, requires the $\ell_1$-normalization of $\mathbf{b}$ within the angle-encoding transformation. Thus, the petal angle in radians is given by $2\pi\cdot b_i/\lVert\mathbf{b}\rVert_1$.
    \item[\textbf{G5}.]\textbf{The value $z_i$ is mapped into the petal length}. As in the radial product plots analyzed by Hadley \& Hoffmann \cite{hadley2011}, to keep the product proportional to the area (design goal~G2), the value mapped to the radius must be square-root transformed. Therefore, the length of the petal must be proportional to $\sqrt{z_i}$.

\end{itemize}
\subsection{Petal shape}
For the shape of the petal, we based ourselves on the rhodonea curves, also known as rose curves, which were described by Grandi~\cite{grandi1723} in the 18th century. Rhodonea curves are given by the polar equation
\begin{equation}\label{eq:rhodonea}
    r = a\cdot\cos(m\theta)\quad\text{or}\quad r = a\cdot\sin(m\theta)\,,
\end{equation}
where $a$ represents the length of the rhodonea petals and each petal is described by a sinusoid half-cycle of $\pi/m$ radians. 

Since the two formulations listed in \autoref{eq:rhodonea} differ only by a rotation around the origin, we will adopt the sine formulation from now on, as it has one entire petal bounded by the angular interval ${0\leq\theta\leq\pi/m}$, which simplifies the notation. In particular, the petal of the rhodonea curve is bounded by an angle of $\beta$ radians when ${m=\pi/\beta}$. Thus, we define a \textit{rhodonea petal} of angle $\beta$ and length $a$ as
\begin{equation*}
    \altmathcal{R}_{\beta,\,a}=\left\{(r,\,\theta):\;\; 0 \leq r \leq a\cdot\sin\left(\frac{\pi}{\beta}\theta\right)\,,\;\; 0\leq\theta\leq\beta\right\}
\end{equation*}
The value $a$ of the rhodonea petal $\altmathcal{R}_{\beta,\,a}$ is expressed as the length of the petal, a clearly observable encoding. In contrast, the value $\beta$ is only implicitly expressed as the bounding angle of the petal, as other common observable encodings of the angle (for example, the area, arc or chord of pie charts slices~\cite{kosara2022}) are also affected in this case by the length of the petal $a$. Therefore, to explicitly encode the angle $\beta$ and thus facilitate its observation, we designed a new family of petal shapes that result from the affine combination of the rhodonea petal $\altmathcal{R}_{\beta,\,a}$ and the circumference arc $r=a$ for $0\leq\theta\leq\beta$. Thus, for each $0<\kappa<1$, we define our proposed petals of angle $\beta$ and length $a$ as
\begin{equation*}
    \hspace*{-8pt}\altmathcal{P}_{\kappa,\,\beta,\,a}=\left\{(r,\,\theta):\;\; 0 \leq r \leq \kappa \cdot a + (1-\kappa)\cdot a\cdot\sin\left(\frac{\pi}{\beta}\theta\right)\,,\;\; 0\leq\theta\leq\beta\right\}
\end{equation*}

A comparison of the rhodonea petal and our proposed petal for $\kappa=0.5$ can be seen in \autoref{fig:petal}, where it can be observed that the angle $\beta$ of our proposed petal $\altmathcal{P}_{\kappa,\,\beta,\,a}$ can now be observed through the central angle encoding, similar to the central angle of a slice in a pie chart. 

\begin{figure}[h!t]
\centering
\includegraphics[width=0.8\linewidth]{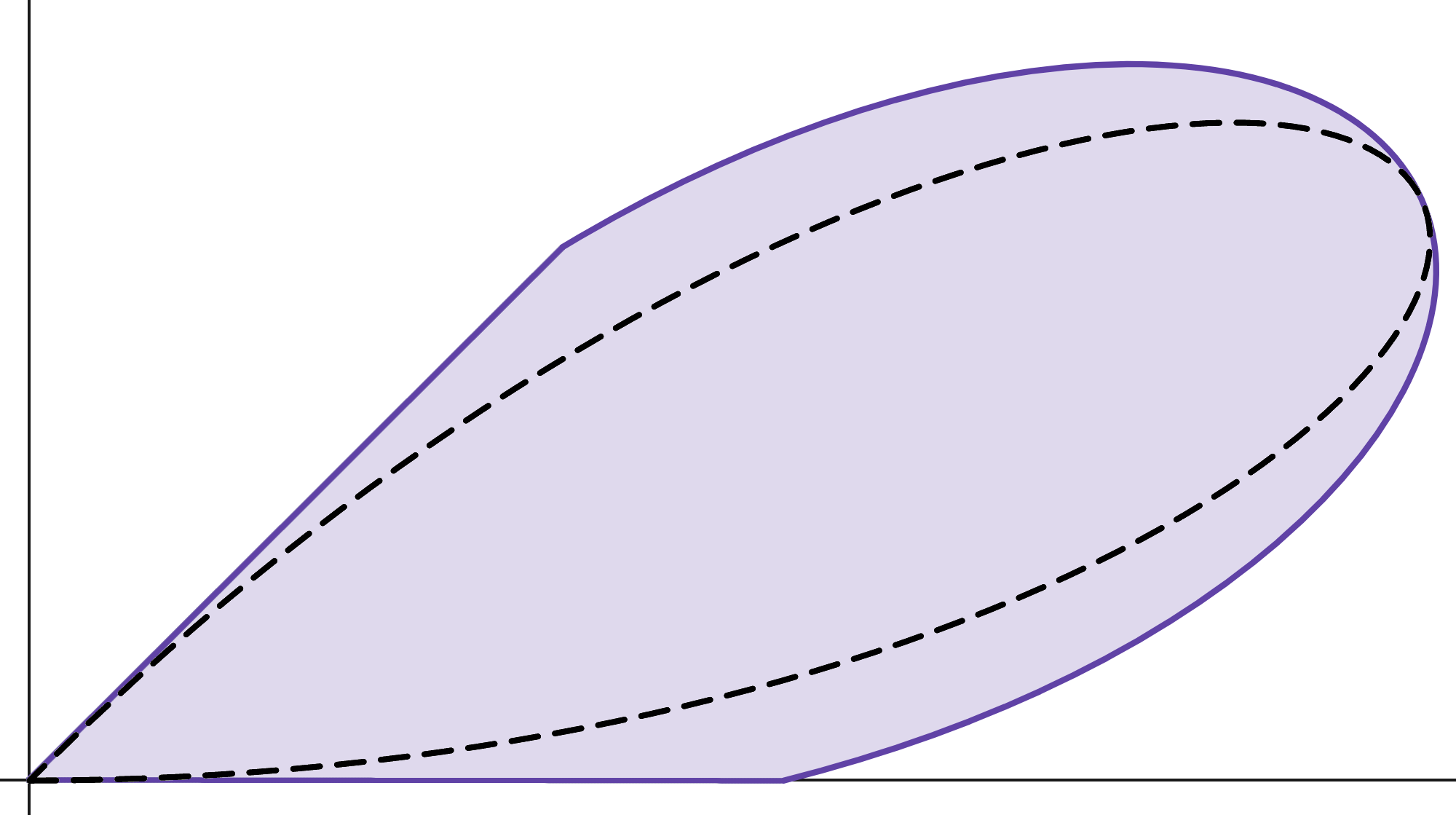}
\caption{Rhodonea petal $\altmathcal{R}_{\beta,\,a}$ (dashed line) on top of our proposed petal shape $\altmathcal{P}_{\kappa,\,\beta,\,a}$ (solid line) for $a=1$, $\beta=\pi/4$, $\kappa=0.5$.} 
\label{fig:petal}
\end{figure}

Finally, the following result shows that the area of the petal $\altmathcal{P}_{\kappa,\,\beta,\,a}$ is proportional to the squared of its length times its angle, i.e.,~$a^2\beta$.

\begin{proposition} \label{proposition}
Given $\kappa$, $0<\kappa<1$, the area under the polar curve
\[
r = \kappa \cdot a + (1-\kappa)\cdot a\cdot\sin\left(\frac{\pi}{\beta}\theta\right)
\] between $\theta=0$ and $\theta=\beta$ is given by  
\begin{equation*}
A=\beta\;a^2 K\,,\quad\text{where}\quad K = \frac{-8\kappa^2+8\kappa+\pi(3\kappa^2-2\kappa+1)}{4\pi}\,.
\end{equation*}

\end{proposition}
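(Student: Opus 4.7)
The plan is to evaluate the area by a direct application of the polar area formula $A = \tfrac{1}{2}\int_0^\beta r^2\, d\theta$, which applies because the region described is precisely the set of points swept by the radius vector from $\theta = 0$ to $\theta = \beta$ under the given polar curve. After pulling the factor $a^2$ out of $r^2$, the integrand becomes $\bigl[\kappa + (1-\kappa)\sin(\pi\theta/\beta)\bigr]^2$, which I would expand into three pieces: a constant $\kappa^2$, a cross term $2\kappa(1-\kappa)\sin(\pi\theta/\beta)$, and a squared-sine term $(1-\kappa)^2\sin^2(\pi\theta/\beta)$.

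Next I would integrate each piece separately on $[0,\beta]$. The constant yields $\kappa^2\beta$. The cross term uses $\int_0^\beta \sin(\pi\theta/\beta)\,d\theta = 2\beta/\pi$, giving the contribution $4\kappa(1-\kappa)\beta/\pi$. For the quadratic term I would apply the half-angle identity $\sin^2 u = \tfrac{1}{2}(1-\cos 2u)$, which reduces $\int_0^\beta \sin^2(\pi\theta/\beta)\,d\theta$ to $\beta/2$ and thus yields $(1-\kappa)^2\beta/2$. Summing, multiplying by the leading $a^2/2$, and factoring out $\beta a^2$ gives $A = \beta a^2\bigl[\kappa^2/2 + 2\kappa(1-\kappa)/\pi + (1-\kappa)^2/4\bigr]$.

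The final step is the algebraic rearrangement into the stated form. Placing all three summands over the common denominator $4\pi$ and expanding $(1-\kappa)^2$ produces a numerator of $2\pi\kappa^2 + 8\kappa - 8\kappa^2 + \pi - 2\pi\kappa + \pi\kappa^2$, which I would regroup by separating the terms carrying a factor of $\pi$ from the rest, obtaining exactly $-8\kappa^2 + 8\kappa + \pi(3\kappa^2 - 2\kappa + 1)$, as required.

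The argument is essentially a mechanical computation, so the only genuine risk is a careless algebra slip while collecting terms over the common denominator; I would therefore sanity-check the final formula in the two degenerate regimes. Setting $\kappa \to 0$ yields $K = 1/4$, recovering the known area $a^2\beta/4$ of the rhodonea half-petal $r = a\sin(\pi\theta/\beta)$, while $\kappa \to 1$ yields $K = 1/2$, recovering the area $a^2\beta/2$ of a circular sector of radius $a$ and central angle $\beta$. Matching both limits would confirm the derivation and, in particular, the encoding claim that the petal area scales as $a^2\beta$ with proportionality constant $K$ depending only on the shape parameter $\kappa$.
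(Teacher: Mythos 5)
Your computation is correct and follows the same route as the paper's proof, namely a direct application of the polar area formula $A=\tfrac{1}{2}\int_0^\beta r^2\,d\theta$ followed by expansion and term-by-term integration; the paper simply states the resulting value of the integral without showing the intermediate algebra that you spell out. Your limit checks at $\kappa\to 0$ (giving $K=1/4$) and $\kappa\to 1$ (giving $K=1/2$) are a worthwhile addition that the paper does not include.
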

\begin{proof}
The area under a polar curve between $\theta=s$ and $\theta=t$ is given by 
\[
\int_s^t \frac{1}{2}r^2 d\theta
\]
Thus, the area of $\displaystyle r = \kappa \cdot a + (1-\kappa)\cdot a\cdot\sin\left(\frac{\pi}{\beta}\theta\right)$ between $\theta=0$ and $\theta=\beta$ is given by
\begin{align*}
A &= \int_0^\beta \frac{1}{2}\left(\kappa \cdot a + (1-\kappa)\cdot a\cdot\sin\left(\frac{\pi}{\beta}\theta\right)\right)^2 d\theta \\
&= \beta a^2 \left(\frac{-8\kappa^2+8\kappa+\pi(3\kappa^2-2\kappa+1)}{4\pi}\right) = \beta a^2 K\,.
\end{align*}

\end{proof}

\subsection{Petal composition}
Given the $\mathbf{b} \cdot \mathbf{z}$ dot product described in \autoref{eq:dotproduct}, if we define the angle and length of the $i$-th petal $\altmathcal{P}_{\kappa,\,\beta_i,\,a_i}$ as
\begin{equation}\label{eq:angleandlength}
\beta_i= 2\pi\frac{b_i}{\lVert\mathbf{b}\rVert_1}\quad\text{and}\quad a_i=\sqrt{z_i}\,,
\end{equation}

the petal angle encodes the $b_i$ value (design goal~G4) and the value $z_i$ is mapped square-root transformed into the petal length (design goal~G5). Additionally, since the sum of the angles $\beta_i$ is equal to $2\pi$ radians, we can arrange the petals without overlapping  (design goal~G3), by adequately rotating the petals around the origin. Finally, the area of the $i$-th petal $\altmathcal{P}_{\kappa,\,\beta_i,\,a_i}$ is proportional to the $b_ {i} z_{i}$ product (design goal~G2), as shown in the following result. 
\begin{corollary}
Given two non-negative $n$-dimensional vectors $\mathbf{b}$~and~$\mathbf{z}$ and a scalar $\kappa$, with $0<\kappa<1$, the area of each $i$-th petal $\altmathcal{P}_{\kappa,\,\beta_i,\,a_i}$, where
\[
\beta_i=2\pi\frac{b_i}{{\lVert\mathbf{b}\rVert_1}}\quad\text{and}\quad a_i=\sqrt{z_i}\,,
\]
is proportional to the product of the $i$-th element of each vector, $b_i z_i$.

\end{corollary}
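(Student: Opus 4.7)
The plan is to apply the preceding Proposition directly to the petal $\altmathcal{P}_{\kappa,\,\beta_i,\,a_i}$ with the specific choices of angle and length given in the Corollary's hypothesis, and then verify that the resulting expression factors as a constant (independent of $i$) times $b_i z_i$.

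First, I would invoke the Proposition to obtain the area of the $i$-th petal as $A_i = \beta_i\, a_i^2\, K$, where $K$ depends only on $\kappa$ and $\pi$, not on the index $i$. Next, I would substitute the hypothesized values $\beta_i = 2\pi\,b_i/\lVert\mathbf{b}\rVert_1$ and $a_i = \sqrt{z_i}$, noting that $a_i^2 = z_i$ (this is exactly the reason for the square-root transform imposed by design goal G5). The resulting expression is
\[
A_i \;=\; 2\pi\,\frac{b_i}{\lVert\mathbf{b}\rVert_1}\cdot z_i\cdot K \;=\; \underbrace{\frac{2\pi K}{\lVert\mathbf{b}\rVert_1}}_{\text{constant in }i}\cdot\, b_i z_i\,.
\]

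Finally, I would observe that the prefactor $2\pi K/\lVert\mathbf{b}\rVert_1$ depends only on $\kappa$ and on the fixed vector $\mathbf{b}$, not on the particular petal index $i$, so it is a single constant of proportionality across all petals in the same plot. This yields $A_i \propto b_i z_i$ as claimed.

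There is no real obstacle here: the Corollary is essentially a bookkeeping consequence of the Proposition, and the only subtlety to state explicitly is that non-negativity of $b_i$ and $z_i$ is needed so that $\beta_i \geq 0$ and $a_i = \sqrt{z_i}$ is well-defined, making $\altmathcal{P}_{\kappa,\,\beta_i,\,a_i}$ a legitimate petal region. I would mention this briefly to close the argument cleanly.
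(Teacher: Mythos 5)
Your proof is correct and is essentially identical to the paper's: both invoke Proposition~\ref{proposition} to get $A_i=\beta_i a_i^2 K$, substitute $\beta_i=2\pi b_i/\lVert\mathbf{b}\rVert_1$ and $a_i=\sqrt{z_i}$, and observe that $2\pi K/\lVert\mathbf{b}\rVert_1$ is a constant independent of $i$. Your added remark about non-negativity guaranteeing that the petal region is well-defined is a small, harmless refinement the paper does not bother to state.
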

\begin{proof}
Lets denote the area of the $i$-th petal $\altmathcal{P}_{\kappa,\,\beta_i,\,a_i}$ as $A_i$. By Proposition \ref{proposition}, it follows that
\[
A_i= \beta_i a_i^2 K =  2\pi\frac{b_i}{\lVert\mathbf{b}\rVert_1}(\sqrt{z_i})^2 K=b_iz_i\frac{2\pi K}{\lVert\mathbf{b}\rVert_1}\,,
\]
where
\[
K = \frac{-8\kappa^2+8\kappa+\pi(3\kappa^2-2\kappa+1)}{4\pi}\,.
\]
Therefore, the area $A_i$ is proportional to the product $b_ {i} z_ {i}$, since $2\pi K/\lVert\mathbf{b}\rVert_1$ is a constant which value depends solely on the $\ell_1$-norm of the given vector $\mathbf{b}$ and the fixed parameter~$\kappa$.
\end{proof}

Therefore, the proposed visualization fulfills G2, G3, G4, and G5 design goals. However, we consider the resemblance to a flower (design goal~G1) somewhat limited since, as shown in \autoref{fig:noflower}, the resulting graphical elements for different angle values do not seem equally shaped.
This is the case even when, as in \autoref{fig:noflower}, the petal length is the same for each petal.

\begin{figure}[h!t]
     \centering
     \begin{subfigure}[t]{0.49\linewidth}
         \centering
         \includegraphics[width=\textwidth]{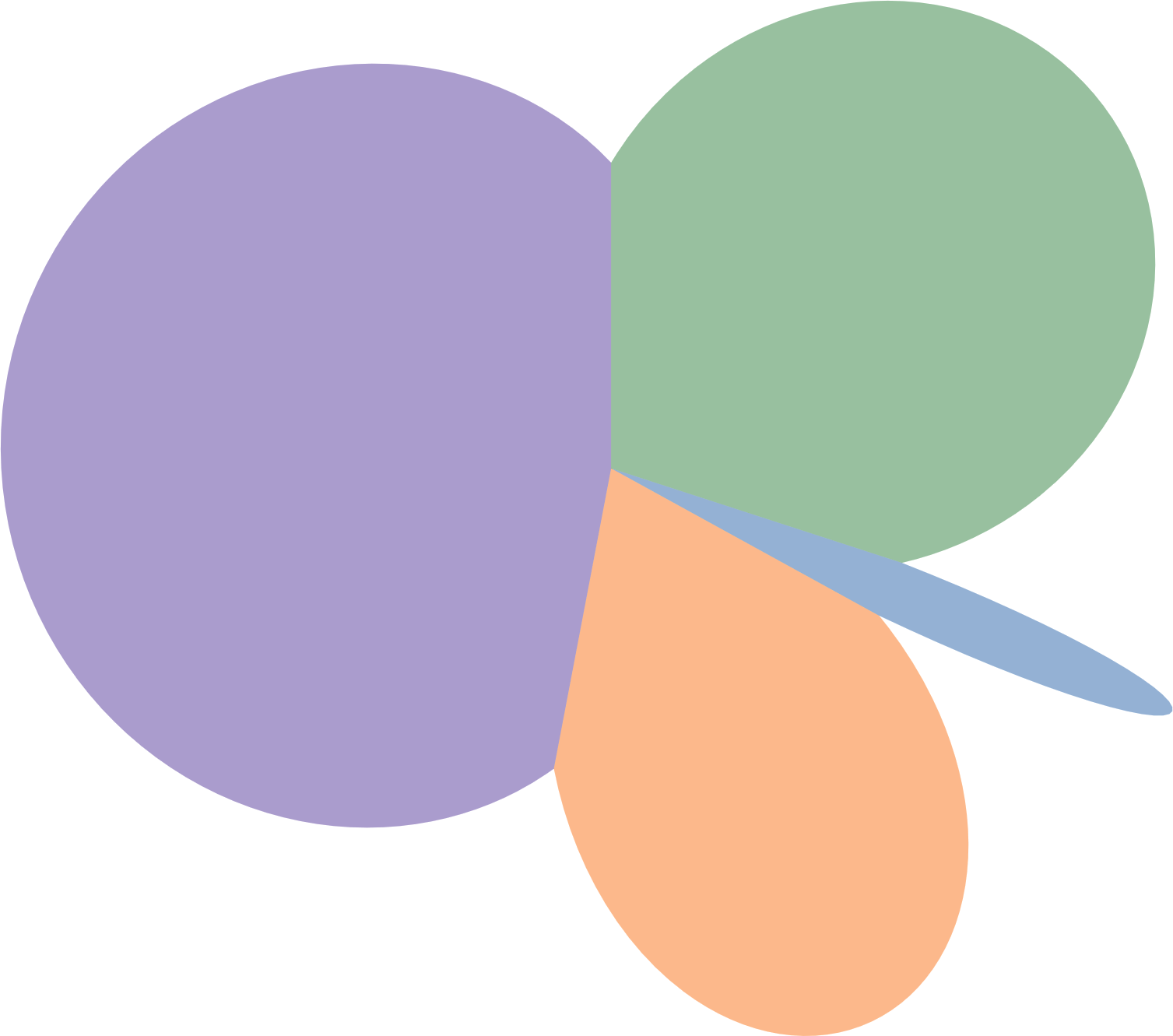}
         \caption{$\altmathcal{P}_{\kappa,\,\beta_i,\,a}$ petals. The angles $\beta_i$ are, from the largest to the smallest, $2\pi\cdot 0.47$ (purple), $2\pi\cdot 0.3$ (green), $2\pi\cdot 0.2$ (orange) and $2\pi\cdot 0.03$ (blue).}
         \label{fig:noflower}
     \end{subfigure}
     \hfill
     \begin{subfigure}[t]{0.45\linewidth}
         \centering
         \includegraphics[width=\textwidth]{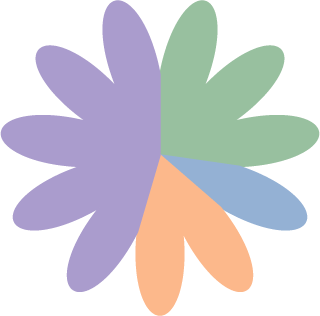}
         \caption{$\altmathcal{M}_{N, \kappa,\,\eta,\,a}$ multilobe petals with $N = 11$. The number of lobes $\eta_i$ are, from the largest to the smallest, $5$ (purple), $3$ (green), $2$ (orange) and $1$ (blue).}
         \label{fig:flower}
     \end{subfigure}
        \caption{Four petals rotated around the origin to fully cover 360 degrees with $\kappa=0.5$ and the same length for every petal ($a=1$).} 
        \label{fig:shapes}
\end{figure}

The solution to this lack of resemblance with a flower was inspired by nature itself: multilobed petals.

\subsubsection{Multilobed petal composition}
Lobes are the upper (free) parts of the petals, the number of which can vary between species. For example, \textit{Cerastium arvense} petals are bifid (2 lobes), \textit{Layia glandulosa} petals are trifid (3 lobes), and \textit{Silene hookeri} petals have 4 lobes. 

In our case, we allow each of our $n$ petals to have a different number of lobes $\eta_i$. More precisely, each of our $n$ multilobed petals of length $a_i$ and angle $\gamma_i=\eta_i\beta^\ast$ results from stitching together $\eta_i$ equal $\altmathcal{P}_{\kappa,\,\beta^\ast,\,a_i}$ shapes. The angle $\beta^\ast$ spanned by each lobe is kept constant to homogenize the shape of our $n$ multilobed petals. Furthermore, to arrange the multilobed petals resembling a flower (design goal~G1) and without overlapping (design goal~G3) as shown in \autoref{fig:flower}, the sum of their angles $\gamma_i=\eta_i\beta^\ast$ must be $2\pi$ radians; therefore, 

\[
\beta^\ast=\frac{2\pi}{\sum_{i=1}^n\eta_i}\quad\text{and}\quad\gamma_i=\eta_i\beta^\ast=2\pi\frac{\eta_i}{\sum_{i=1}^n\eta_i} 
\]

Thus, the shape of our multilobed petals is univocally determined by the overall number of lobes $N=\sum_{i=1}^n\eta_i$, the petal number of lobes $\eta_i$, the petal length $a_i$, and the parameter $\kappa$. Accordingly, we denote our multilobed petals by $\altmathcal{M}_{N,\kappa,\,\eta,\,\,a}$. The length of a multilobed petal $\altmathcal{M}_{N,\kappa,\,\eta,\,\,a}$ is given by $a$, its angle by $\gamma=2\pi\;\eta/N$  and, as shown in the following result, its area is the same as that of a unilobed petal with the same length and angle, i.e., a petal $\altmathcal{P}_{\kappa,\,\gamma,\,a}$. 

\begin{corollary}\label{cor:multilobearea}
The area of a multilobe petal, $\altmathcal{M}_{N,\kappa,\,\eta,\,\,a}$, where $N$ an $\eta$ are non-negative integers with $\eta \leq N$, $0<\kappa < 1$, and $a\geq 0$ is given by 
\[
A= a^2\gamma\; K\,,
\]
where
\[ 
\gamma=2\pi\frac{\eta}{N}\quad\text{and}\quad K = \frac{-8\kappa^2+8\kappa+\pi(3\kappa^2-2\kappa+1)}{4\pi}\,.
\]
\end{corollary}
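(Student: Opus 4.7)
The plan is to observe that, by the very construction given just before the corollary, the multilobe petal $\altmathcal{M}_{N,\kappa,\eta,a}$ is an (essentially) disjoint union of $\eta$ congruent copies of the unilobe petal $\altmathcal{P}_{\kappa,\beta^\ast,a}$, where $\beta^\ast = 2\pi/N$. Since congruent copies are obtained by rigid rotations around the origin, they preserve area, and the overlap between consecutive lobes lies on a one-dimensional boundary of measure zero. Hence area additivity gives
\[
\operatorname{Area}\bigl(\altmathcal{M}_{N,\kappa,\eta,a}\bigr) = \eta \cdot \operatorname{Area}\bigl(\altmathcal{P}_{\kappa,\beta^\ast,a}\bigr).
\]

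Next, I would invoke Proposition \ref{proposition} directly with the parameters of each lobe, namely angle $\beta^\ast$ and length $a$, to obtain
\[
\operatorname{Area}\bigl(\altmathcal{P}_{\kappa,\beta^\ast,a}\bigr) = \beta^\ast a^2 K,
\]
with $K$ defined as in the statement. Substituting and using $\beta^\ast = 2\pi/N$ yields
\[
\operatorname{Area}\bigl(\altmathcal{M}_{N,\kappa,\eta,a}\bigr) = \eta\,\beta^\ast a^2 K = \frac{2\pi\eta}{N}\,a^2 K = \gamma\, a^2 K,
\]
which is exactly the claimed formula.

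There is really no hard step here: once the construction ``stitch together $\eta$ copies of $\altmathcal{P}_{\kappa,\beta^\ast,a}$'' is accepted, the result reduces to a single multiplication by $\eta$ and a rewrite of $\eta\beta^\ast$ as $\gamma$. The only point that deserves a brief sentence of justification is that the stitching produces a set whose area is truly the sum of lobe areas, i.e., that the interiors are disjoint. This follows because each lobe occupies a distinct angular sector $[(k-1)\beta^\ast, k\beta^\ast]$, for $k=1,\dots,\eta$, and the petal boundaries $r = \kappa a + (1-\kappa)a\sin(\pi\theta/\beta^\ast)$ are described in polar coordinates on pairwise disjoint angular intervals (up to the shared rays $\theta = k\beta^\ast$, which have zero area). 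Thus the only nontrivial input is Proposition \ref{proposition} itself, and the corollary is essentially a bookkeeping step on top of it.
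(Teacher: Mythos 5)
Your proof is correct and takes essentially the same route as the paper: the multilobe petal is by construction $\eta$ congruent copies of $\altmathcal{P}_{\kappa,\,\beta^\ast,\,a}$ with $\beta^\ast=2\pi/N$, so Proposition~\ref{proposition} gives area $\eta\,\beta^\ast a^2 K=\gamma a^2 K$. Your extra remark that the lobes occupy disjoint angular sectors (so areas genuinely add) is a small rigor bonus the paper leaves implicit.
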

\begin{proof}
The area of a multilobe petal $\altmathcal{M}_{N,\kappa,\,\eta,\,\,a}$ is, by design, $\eta$ times the area of a $\altmathcal{P}_{\kappa,\,\beta^\ast,\,a}$ petal, where
\[
\beta^\ast=\frac{2\pi}{N}\,.
\]
By Proposition \ref{proposition}, it follows that the area is thus given by
\[
\eta\cdot\left(a^2\beta^\ast K \right)= a^2 \frac{2\pi\eta}{N}  K\,.
\]
  
\end{proof}

Given the $\mathbf{b} \cdot \mathbf{z}$ dot product described in \autoref{eq:dotproduct}, we can map the value $z_i$ square-root transformed into the length of the $i$-th multilobe petal $\altmathcal{M}_{N,\kappa,\,\eta_i,\,\,a_i}$ (design goal~G5) by setting $a_i=\sqrt{z_i}$. However, since the number of petals $\eta_i$ must be a non-negative integer, we can only set $\eta_i=b_i$ and thus encode precisely $b_i$ in the multilobe petal angle $\gamma_i$ (design goal~G4) if the elements of $\mathbf{b}$ are non-negative integers. In that case, the area of the $i$-th multilobe petal $\altmathcal{M}_{N,\kappa,\,\eta_i,\,\,a_i}$ is also proportional to the $b_iz_i$ product (design goal~G2), as shown in the following result.

\begin{corollary}
Given two non-negative $n$-dimensional vectors $\mathbf{b}$~and~$\mathbf{z}$ and a scalar $\kappa$, with $0<\kappa<1$. If the elements of $\mathbf{b}$ are all non-negative integers, then the area of each $i$-th multilobe petal $\altmathcal{M}_{N,\kappa,\,\eta_i,\,\,a_i}$, where $N=\sum_{i=1}^n\eta_i$, $\eta_i=b_i$ and $a_i=\sqrt{z_i}$, is proportional to the product of the $i$-th element of each vector, $b_i z_i$.
\end{corollary}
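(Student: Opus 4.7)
The plan is to invoke Corollary~\ref{cor:multilobearea} directly, since that corollary already gives a closed-form expression for the area of a generic multilobe petal $\altmathcal{M}_{N,\kappa,\,\eta,\,a}$. The present statement is essentially that corollary specialized to the particular choices $\eta_i = b_i$, $a_i = \sqrt{z_i}$, and $N = \sum_{j=1}^n b_j$, so the proof will amount to substitution plus an observation about which quantities are constants.

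First I would check that the construction is well-defined: the multilobe petal $\altmathcal{M}_{N,\kappa,\,\eta_i,\,a_i}$ requires $\eta_i$ to be a non-negative integer with $\eta_i\leq N$, and this is precisely what the hypothesis ``the elements of $\mathbf{b}$ are non-negative integers'' guarantees once we set $\eta_i=b_i$ and $N=\sum_{j=1}^n b_j$. Also $a_i=\sqrt{z_i}\geq 0$ by non-negativity of $\mathbf{z}$, so the corollary applies.

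Next I would plug these choices into Corollary~\ref{cor:multilobearea}, obtaining
\[
A_i = a_i^2\,\gamma_i\,K = z_i\cdot 2\pi\frac{b_i}{N}\cdot K = b_i z_i \cdot \frac{2\pi K}{N}\,,
\]
where $K$ is the constant from Proposition~\ref{proposition}. The final step is the key observation: the factor $2\pi K/N$ does not depend on the index $i$. Indeed, $K$ depends only on the fixed parameter $\kappa$, and $N=\sum_{j=1}^n b_j$ is determined by the entire vector $\mathbf{b}$ rather than by any single coordinate. Hence a single proportionality constant $c=2\pi K/N$ relates $A_i$ to $b_i z_i$ uniformly across all petals, which is what the statement claims.

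I do not foresee a genuine obstacle here; the argument is a one-line substitution followed by bookkeeping. The subtlety worth spelling out, rather than hard work, is the meaning of ``proportional'': it must be the \emph{same} constant $c$ for every $i$, since that is what makes cross-petal area comparisons in Petal Product Plots faithfully reflect the ratios of the underlying $b_i z_i$ products. Once this uniformity is made explicit, the result follows immediately from the earlier corollary.
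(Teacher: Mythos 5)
Your proposal is correct and follows the same route as the paper's own proof: both invoke Corollary~\ref{cor:multilobearea} with $\eta_i=b_i$ and $a_i=\sqrt{z_i}$, substitute to get $A_i = b_i z_i\cdot 2\pi K/N$, and note that the proportionality constant is uniform over $i$ (the paper simply writes $N$ as $\lVert\mathbf{b}\rVert_1$, which coincides with your $\sum_j b_j$ for non-negative $\mathbf{b}$). Your added remarks on well-definedness and on the constant being the same for every $i$ are a mild, welcome tightening but not a different argument.
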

\begin{proof}
Lets denote the area of the $i$-th petal $\altmathcal{M}_{N,\kappa,\,\eta_i,\,\,a_i}$ as $A_i$. Since $\eta_i=b_i$, the angle $\gamma_i$ of the the $i$-th petal $\altmathcal{M}_{N,\kappa,\,\eta_i,\,\,a_i}$ is given by $2\pi b_i/N$. By Corollary \ref{cor:multilobearea}, it follows that
\[
A_i= a^2 \gamma_i  K = (\sqrt{z_i})^2 2\pi\frac{b_i}{\lVert\mathbf{b}\rVert_1}K=b_iz_i\frac{2\pi K}{\lVert\mathbf{b}\rVert_1}\,,
\]
where
\[
K = \frac{-8\kappa^2+8\kappa+\pi(3\kappa^2-2\kappa+1)}{4\pi}\,.
\]
Therefore, the area $A_i$ is proportional to the product $b_ {i} z_ {i}$, since $2\pi K/\lVert\mathbf{b}\rVert_1$ is a constant which value depends solely on the $\ell_1$-norm of the given vector $\mathbf{b}$ and the fixed parameter~$\kappa$.
\end{proof}

Otherwise, i.e., if there exists an $i$ such that $b_i$ is not a non-negative integer, the following result shows that for a given overall number of lobes $N$, Hamilton's apportionment method provides us with the $\eta_i$ non-negative integers that minimize the sum of absolute deviations between the multilobes petal angle $\gamma_i$ and the perfect proportional angles $\beta_i$ from \autoref{eq:angleandlength}.
Furthermore, we show that the approximation error for each angle is bounded $2\pi/N$ and, thus, can be reduced by increasing the overall number of lobes~$N$.

\begin{proposition}\label{proposition2}
Let $\mathbf{b}$ be a non-negative $n$-dimensional vector and $N$ be non-negative integer. If we denote
\[
q_i=N\frac{b_i}{\lVert\mathbf{b}\rVert_1} \quad\text{and}\quad \mathbf{q} = \left( q_1,\cdots,q_n \right)\,,
\]
then any of the one or more solutions $\bm{\eta}=\left(\eta_1,\cdots,\eta_n \right)=f(\mathbf{q},N)$ of Hamilton's apportionment method, $M(\mathbf{q},N)$, is also a solution of the following minimization problem
\[
\underset{\bm{\eta}\,\in\,\mathbb{N}^n}{\text{minimize}}\quad\sum_{i=1}^n  \left\lvert\,2\pi\frac{\eta_i}{\sum_{i=1}^n\eta_i} - 2\pi\frac{b_i}{\lVert\mathbf{b}\rVert_1}\,\right\rvert\quad\text{subject to}\quad \sum_{i=1}^n\eta_i=N\,.
\]
Moreover, if we denote
\[
\gamma_i=2\pi\frac{\eta_i}{N}\quad\text{and}\quad\beta_i= 2\pi\frac{b_i}{\lVert\mathbf{b}\rVert_1}\,,
\]
then
\[
\left\lvert\,\gamma_i - \beta_i\,\right\rvert<\frac{2\pi}{N}\,.
\]
\end{proposition}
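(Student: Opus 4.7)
The plan is to reduce the minimization to the classical apportionment problem and then invoke the well-known optimality of Hamilton's largest-remainders method. First, I would observe that by definition $\sum_{i=1}^n q_i = N \cdot \lVert\mathbf{b}\rVert_1/\lVert\mathbf{b}\rVert_1 = N$, so both $\bm{\eta}$ (by the feasibility constraint) and $\mathbf{q}$ sum to $N$. Writing $\beta_i = (2\pi/N)q_i$ and $\gamma_i = (2\pi/N)\eta_i$, the objective factors cleanly as
\[
\sum_{i=1}^n \left\lvert\,\gamma_i - \beta_i\,\right\rvert = \frac{2\pi}{N}\sum_{i=1}^n \left\lvert\,\eta_i - q_i\,\right\rvert,
\]
so the minimization is equivalent to minimizing $\sum_i |\eta_i - q_i|$ over $\bm{\eta}\in\mathbb{N}^n$ with $\sum_i \eta_i = N$.

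Next, I would show that Hamilton's method solves this reduced problem in two standard steps. Let $f_i = q_i - \lfloor q_i\rfloor \in [0,1)$, $r = N - \sum_i \lfloor q_i\rfloor$, and $m_i = \eta_i - \lfloor q_i\rfloor \in \mathbb{Z}$, so the objective becomes $\sum_i |m_i - f_i|$ under $\sum_i m_i = r$. Step one (restricting to near-quota values): any optimizer has $m_i\in\{0,1\}$, proven by an exchange argument. If some $m_i \geq 2$, its contribution exceeds $1$; by the sum constraint there must exist $j$ with $m_j \leq 0$, and transferring one unit from $i$ to $j$ strictly reduces the objective (a short case analysis on the signs of $m_j - f_j$ and $m_j + 1 - f_j$ confirms this, using $f_j < 1$). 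The symmetric case $m_i \leq -1$ is handled analogously. Step two (selecting which entries round up): among vectors with $m_i\in\{0,1\}$, let $S = \{i : m_i = 1\}$. The constraint forces $|S| = r$, and the objective simplifies to
\[
\sum_{i\notin S} f_i + \sum_{i\in S}(1 - f_i) = \sum_{i=1}^n f_i + |S| - 2\sum_{i\in S} f_i.
\]
Since $\sum_i f_i$ and $|S| = r$ are constants, the minimum is attained by any $S$ of size $r$ containing the indices with the $r$ largest fractional parts, which is exactly what Hamilton's largest-remainders rule selects.

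Finally, for the bound: any Hamilton output $\bm{\eta} = f(\mathbf{q},N)$ satisfies $\eta_i \in \{\lfloor q_i\rfloor, \lceil q_i\rceil\}$, and in the degenerate case $f_i = 0$ one further has $\eta_i = q_i$ (since indices with zero fractional part are never selected when any strictly positive fractional part remains, and $r = 0$ otherwise). Hence $|\eta_i - q_i| \in \{f_i, 1 - f_i\}$ if $f_i > 0$, and equals $0$ if $f_i = 0$; in either case $|\eta_i - q_i| < 1$ strictly. Multiplying by $2\pi/N$ gives $|\gamma_i - \beta_i| < 2\pi/N$, as claimed.

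The main obstacle I anticipate is the exchange argument of step one, which has to be written carefully because Hamilton's method is multi-valued whenever there are ties in the fractional parts, and the statement requires that \emph{every} such output be optimal. This is handled naturally by the structure above: step two shows the minimum depends only on $|S| = r$ and $\sum_{i\in S} f_i$, and every tie-breaking choice of the $r$ largest fractional parts yields the same optimal objective value.
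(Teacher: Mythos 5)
Your proof is correct, and its skeleton matches the paper's: both reduce the angular objective to the classical apportionment problem by noting that, under the constraint $\sum_i\eta_i=N$, the objective equals $(2\pi/N)\sum_i\lvert\eta_i-q_i\rvert$, and both obtain the error bound by scaling the quota condition $\lvert\eta_i-q_i\rvert<1$ by $2\pi/N$. The difference is that the paper simply cites Garrett's result that Hamilton's method minimizes $\sum_i\lvert\eta_i-q_i\rvert$ and stays within the quota, whereas you prove both facts from first principles via the exchange argument and the largest-remainders selection. Your route is self-contained and makes the tie-breaking issue explicit (every optimal $S$ of size $r$ gives the same objective value, so every Hamilton output is optimal), which the citation-based proof leaves implicit; the cost is that the exchange step needs slightly more care than your sketch admits. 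Specifically, when $m_i\geq 2$ and the chosen $j$ has $m_j=0$ and $f_j=0$, the transfer changes the objective by exactly $0$, not strictly negative; one must either argue that a $j$ with $m_j\leq-1$ or with $m_j=0,\ f_j>0$ always exists (it does, since otherwise $r\geq\lvert J^c\rvert$ while $r=\sum_k f_k<\lvert J^c\rvert$), or weaken the claim to "some optimizer has $m_i\in\{0,1\}$," which is all that step two actually requires. With that caveat patched, the argument is complete and proves exactly the stated proposition.
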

\begin{proof}
Let $\bm{\eta}=\left(\eta_1,\cdots,\eta_n \right)\,\in\,\mathbb{N}^n$, $\sum_{i=1}^n\eta_i=N$ be a solution of Hamilton's apportionment method, i.e.,
$f(\mathbf{q},N)=\bm{\eta}\in M(\mathbf{q},N)$. By Garrett \cite{garrett1976}, it follows that $\bm{\eta}$ is the solution of the following minimimization problem
\[
\underset{\bm{\eta}\,\in\,\mathbb{N}^n}{\text{minimize}}\quad\sum_{i=1}^n  \left\lvert\,\eta_i - q_i\,\right\rvert\quad\text{subject to}\quad \sum_{i=1}^n\eta_i=N\,.
\]
Thus, for all $\bm{\eta^\prime}\,\in\,\mathbb{N}^n$, $\displaystyle \sum_{i=1}^n\eta_i^\prime=N$,
\[
\sum_{i=1}^n  \left\lvert\,\eta_i - q_i\,\right\rvert\leq\sum_{i=1}^n  \left\lvert\,\eta_i^\prime - q_i\,\right\rvert\,,
\]
and, multiplying both sides by the positive number $2\pi/N$ and then replacing $q_i$ by $\displaystyle N\frac{b_i}{\lVert\mathbf{b}\rVert_1}$, and $N$ by $\displaystyle \sum_{i=1}^n\eta_i$ and $\displaystyle \sum_{i=1}^n\eta_i^\prime$, we get
\[
\sum_{i=1}^n  \left\lvert\,2\pi\frac{\eta_i}{\sum_{i=1}^n\eta_i} - 2\pi\frac{b_i}{\lVert\mathbf{b}\rVert_1}\,\right\rvert\leq\sum_{i=1}^n  \left\lvert\,2\pi\frac{\eta_i^\prime}{\sum_{i=1}^n\eta_i^\prime} - 2\pi\frac{b_i}{\lVert\mathbf{b}\rVert_1}\,\right\rvert\,.
\]
Therefore, $\bm{\eta}$ is also a solution of the desired minimization problem.

Additionally, since Hamilton's apportionment method \textit{stays within the quota} \cite{garrett1976}, it follows that
\[
\left\lvert\,\eta_i - q_i\,\right\rvert< 1\quad\text{for all}\;i\,.
\]
Again, multiplying both sides by $2\pi/N$ and then replacing $q_i$ by $\displaystyle N\frac{b_i}{\lVert\mathbf{b}\rVert_1}$ we get the desired bound
\[
\left\lvert\,2\pi\frac{\eta_i}{N} - 2\pi\frac{b_i}{\lVert\mathbf{b}\rVert_1}\,\right\rvert<\frac{2\pi}{N}\quad\text{for all}\;i\,.
\]
\end{proof}
Finally, it is important to highlight that the multilobe petal angle being $\gamma_i=\eta_i\beta^\ast$ also provides a new observable encoding alternative to the central angle: the number of lobes of the petal, $\eta_i$. This is a notable advantage of multilobed petals since prior work by Skau \& Kosara~\cite{skau2016} found the central angle the least likely visual cue used for reading in visualizations with similar angular encodings such as pie charts. Furthermore, this new encoding allows users to process information using counting strategies similar to those used on icon arrays~\cite{kreuzmair2016}.
\subsection{Petal Product Plots remarks}
To sum up, the proposed Petal Product Plots $PPP_{N,\kappa}$ represent the dot product of two non-negative $n$-dimensional vectors $\mathbf{b}$ and $\mathbf{z}$, i.e.,
\begin{equation}\label{eq:remarks}
\mathbf{b} \cdot \mathbf{z}=  \sum_{i=1}^ {n}  b_ {i} z_ {i}  =  b_{1} z_{1} + b_{2} z_{2} + \cdots + b_{n} z_{n}\,, \quad b_i \geq 0,\, z_i \geq 0
\end{equation}
In addition to the input data $\mathbf{b}$ and $\mathbf{z}$, the following two parameters determine the visualization graphical elements.
\begin{itemize}[noitemsep]
    \item The number of lobes $N$. As shown in Proposition \ref{proposition2}, this parameter also determines the precision with which each $b_i$ can be encoded. 
    \item The parameter $\kappa$, where (1-$\kappa$) is the proportion of the length of each petal corresponding to the lobes, i.e., the upper free parts of the petals.
\end{itemize}
It is important to note that although $\mathbf{b}$ and $\mathbf{z}$ have interchangeable roles in \autoref{eq:remarks}, their different encodings (petal angles and lengths) make them singular due to $\ell_1$-normalization of $\mathbf{b}$ within the angle-encoding transformation.
Therefore, we believe that this makes PPPs especially suitable for scenarios where the values of the $b_i$ elements can be understood as a part-to-whole.

\section{Petal-X: PPP for CVD risk communication} \label{section:petalx}
Petal-X (see \autoref{fig:teaser}) provides personalized SCORE2 10-year CVD risk explanations using Petal Product Plots to visualize the tailor-made SCORE2 global surrogate models described by \autoref{eq:models}. 

To do so, we encode the linear regression coefficients in the petal angles since coefficients can be interpreted as each risk factor's relative (part-to-whole) contribution. Hence, the patient's (normalized) risk factor values are mapped square-root transformed to the petal length, and the area of the petal represents the risk factor contribution to the 10-year CVD risk. We set a low number of lobes $N$ to facilitate the use of counting strategies. In particular, we set $N=10$ for the male surrogate model and $N=11$ for the female surrogate model. The one-lobe increase in the number of petals in female Petal-X is due to the fact that the optimal Hamilton attribution method would otherwise assign zero lobes to the \textit{non-HDL cholesterol} risk factor. Finally, we set the parameter $\kappa$ to its midrange value (0.5). 

In the remainder of this section, we validate the Petal-X fidelity for the chosen number of lobes $N$ and justify other design decisions.

\subsection{Petal-X fidelity validation}
The selected PPP number of lobes ($N=10$ for the male model and $N=11$ for the female model) led to the encoding of the following approximations of the coefficients of each \autoref{eq:models} surrogate model into the petal angles.
\begin{subequations}\label{eq:petalxmodels}
    \begin{flalign}
     &\hspace*{-11pt}\text{\textit{Petal-X (N=10) Male Global Surrogate}} \nonumber \\
     &\hspace*{-11pt}y =  0.082 \cdot z_{age} + 0.061 \cdot z_{sbp} + 0.041 \cdot z_{smoking} + 0.020 \cdot z_{nonhdl}\label{seq:petlaxmale} \\[5pt] 
     &\hspace*{-11pt}\text{\textit{Petal-X (N=11) Female Global Surrogate}} \nonumber \\
     &\hspace*{-11pt}y =  0.057 \cdot z_{age} + 0.034 \cdot z_{sbp} + 0.023 \cdot z_{smoking} + 0.011 \cdot z_{nonhdl} \label{seq:petalxfemale}
    \end{flalign}
\end{subequations}

Describing the global linear regression surrogate models actually represented in Petal-X allows us to evaluate the fidelity of Petal-X to the original SCORE2 models. 
As in Section~\ref{validation}, we compared the fidelity of the Petal-X and SCORE2 graphical score charts on the dataset described in Section~\ref{dataset}. For a more complete analysis, in addition to Spearman's correlation, three other common fidelity metrics were included: coefficient of determination ($R^2$, calculated as the squared Pearson correlation coefficient), root mean squared error (RSME), and mean absolute error (MAE). 

\begin{table}[h!t]
\caption{Fidelity metrics of the male and female linear regression global surrogate models, Petal-X approximations, and SCORE2 graphical score charts. The best score for each sex and metric is highlighted in bold.}
\label{tab:viz_fidelity}
\begin{tabular}{@{}rrrrr@{}}
\toprule
\textbf{Male} Surrogate Model & Sp. Corr. & $R^{2}$ & RSME & MAE \\ \midrule
Graphical Score Chart                         &0.947&0.907&\textbf{0.013}&\textbf{0.010}  \\
Linear Regression (Eq.\ref{seq:male})         &\textbf{0.969}&\textbf{0.917}&0.015&0.011\\ 
Petal-X (N=10) (Eq.\ref{seq:petlaxmale})            &0.962&0.906&0.015&0.011\\ \bottomrule
\toprule
\textbf{Female} Surrogate Model & Sp. Corr. & $R^{2}$ & RSME & MAE \\ \midrule
Graphical Score Chart                            &0.949&0.897&\textbf{0.009}&\textbf{0.007}\\
Linear Regression (Eq.\ref{seq:female})         &0.964&0.890&0.011&0.009\\
Petal-X (N=11) (Eq.\ref{seq:petalxfemale})            &\textbf{0.972}&\textbf{0.903}&0.012&0.009\\ \bottomrule
\end{tabular}
\end{table}

The fidelity of the male and female linear regression surrogate models represented by Petal-X was deemed adequate, as their Spearman correlation is slightly higher than that of the SCORE2 graphical score charts for each sex, and there are no notable differences in any of the other fidelity metrics (see \autoref{tab:viz_fidelity}).

\subsection{Design decisions}
In addition to using PPPs, other design decisions were made to improve the comprehensibility of Petal-X visual explanations. Some of these decisions were influenced by ESC SCORE2 graphical score charts in order to (1) achieve external consistency and (2) avoid confounding design elements in Petal-X evaluation, since it uses SCORE2 GSCs for the control group.

\textbf{Color.} We chose to use one color for the entire Petal-X to give users a quick visual notice of the severity of the 10-year CVD risk. We use the age-dependent color scale used by ESC GSCs and mobile app, with

\noindent \tikz\draw[redrisk,fill=redrisk] (0,0) circle (.6ex); Red for \textit{very high CVD risk} (treatment recommended)\\
\noindent \tikz\draw[orangerisk,fill=orangerisk] (0,0) circle (.6ex); Orange for \textit{high CVD risk} (treatment to be considered)\\
\noindent \tikz\draw[greenrisk,fill=greenrisk] (0,0) circle (.6ex); Green for \textit{low-to-moderate CVD risk} (treatment not recommended).

\textbf{Grid.} We use a petal-shaped grid design with dotted lines at $0$, $0.25$, $0.5$, $0.75$, and $1$ (only $0$ and $1$ for smoking) of the normalized risk factor values. In addition, grid lines are labeled with the denormalized risk factor values. Both decisions help users understand how a petal area changes if the value of a risk factor is modified, i.e., project risk reduction through risk factor modification. 

\textbf{Legends and labels.} In addition to the color legend, a lobe risk legend with the same levels as the grid is provided to help users identify the risk that each area represents. Furthermore, each petal is labeled outside the grid with the risk factor name and the patient numerical value. In this version, neither the total CVD risk nor the individual risk factor contributions are provided numerically to be able to understand the ease of use of the area encoding on the Petal-X evaluation. 

\section{Evaluation}\label{section:evaluation}
We evaluated Petal-X by conducting a controlled experiment with healthcare experts, as they play a critical role in determining whether or not to adopt Petal-X in clinical settings. The focus on individuals with healthcare knowledge also limits the likelihood of clinical misconceptions arising from participation in the evaluation. 
Before conducting the study, ethical clearance was obtained from the Commission on Ethical Issues in Nursing of the Faculty of Health Sciences (University of Maribor).

The main objective of this evaluation is to compare the Petal-X design with a more traditional representations of CVD risk predictions as graphical score charts in terms of (1) support for the three challenges devised in Section \ref{sec:challenges} and (2) perceived transparency, trust, and intent to use. 

\subsection{Study design}\label{studydesign}
We conducted a between-subject design with two conditions: GSC (control group) and Petal-X (experimental group). The study was implemented using the open source \textit{formr} survey framework~\cite{arslan2020}. In both conditions, the experiment was structured into the following phases. 
\begin{itemize}[noitemsep]
\item[\textbf{P0.}]\textbf{Consent form and randomization.} Information regarding the user study is provided to potential participants. If individuals agree to participate in the user study, they are randomly assigned to the control group (GSC) or the experimental group (Petal-X). We use simple randomization to allocate participants as the total number of participants was not known in advance.

\item[\textbf{P1.}]\textbf{Sociodemographics.} We collected information on the participants' age, gender, academic standing, as well as whether or not they had experience with patients with chronic diseases.

\item[\textbf{P2.}]\textbf{Onboarding.} Participants go through an onboarding phase to become familiar with their condition visualization. The same system was implemented for both Petal-X and GSC based on the step-by-step guide design proposed by Stoiber~et~al. \cite{stoiber2022}.

\item[\textbf{P3.}]\textbf{Performance.} We evaluate performance by assessing the participant’s ability to perform three tasks aligned with the three challenges described in Section~\ref{sec:challenges} using their condition visualization (GSC or Petal-X). Each participant runs 4 trials, each based on a different set of clinical data. Data correspond to two patients of each sex randomly chosen from a set of eight. For each trial, the participants had to:
\begin{itemize}[noitemsep]
\item[\textbf{T0.}] Identify (GSC) or calculate (Petal-X) the 10-year CVD risk.
\item[\textbf{T1.}] Determine which modifiable risk factor contributes the most and the least to the patient's 10-year CVD risk. 
\item[\textbf{T2.}] Given two hypothetical scenarios, select the one that would reduce more the patient's 10-year CVD risk.
\end{itemize}
Completion time and error rate are measured for each task. 

\item[\textbf{P4.}]\textbf{Workload.} We evaluated subjective workload by measuring the NASA Task Load Index (NASA-TLX)~\cite{hart2006}. NASA-TLX consists of six subscales: mental demand, physical demand, temporal demand, overall performance, effort, and frustration level. We used an unweighted NASA-TLX questionnaire, as the literature shows similar results ~\cite{hart2006}, with the benefit of time savings.

\item[\textbf{P5.}]\textbf{Transparency, trust, and intent to use.} Participants go through a questionnaire on perceived transparency, trust and intent to use the system. We adapted a questionnaire by Cramer~et~al.~\cite{cramer2008} originally designed for the art recommender systems domain. Based on this questionnaire, we adapted some questions for our application domain and removed questions related to other scales, such as acceptance or competence. \autoref{tab:scales} shows the final scales and questionnaire items. 

\item[\textbf{P6.}]\textbf{Open-ended questions.} We end the study with two open-ended optional questions on study-related issues and suggestions to improve the visualization system. There is also a final open space for additional feedback.
\end{itemize}

\subsection{Participants} \label{sec:participants}
Participant recruitment was carried out at the Faculty of Health Sciences of the University of Maribor in Slovenia. Therefore, the study material, including questionnaires, onboarding and stimuli, was translated into Slovenian. Two people fluent in both English and Slovenian worked independently on the translation. The two translated versions were compared, and the final version was obtained after minor changes. 

The study was carried out for about a month, during which nursing students completed the study individually while in a group class setting. The subject pool consisted of 12 groups with 10-15 participants each. Of these, 124 individuals participated and 107 completed the study. 

To improve reliability, we used the trust scale (see \autoref{tab:scales}) to exclude presumably invalid respondents based on its reversed Likert item rating, adapting the procedure proposed by J{\'o}zsa \& Morgan\cite{jozsa2017}. First, we removed individuals who scored all Likert items on the trust scale equally (including the reversed item before transformation) if the rating was not the neutral middle one (4 on a scale of 1 to 7). Five participants were excluded due to this condition. 
Then, we used an iterative method to remove those respondents whose scores for ``The system is reliable'' and ``The system is deceptive'' (once reversed) differed more than a certain threshold, except if one of the ratings was the neutral middle one. The final threshold was then established based on the included and excluded participants' Cronbach's alpha. This iterative process filtered out 14 participants who agreed (resp. disagree) simultaneously with both ``The system is deceptive'' and ``The system is reliable'' and at least strongly agree (resp. disagree) with one of them. 

Therefore, 88 individuals (82.2\% of those who completed the survey) were finally included in the analysis, 38 of which were assigned to the control group (GSC) and 50 to the experimental group (Petal-X) .All participants were between  21 and 43 years old, with 75\% of them under 23. Most participants were women (85\%) and final-year undergraduate students (89\%). All but one had previous experience with chronic patients.

\subsection{Stimuli and clinical data}
Participants are not asked to input any clinical data throughout the survey. Instead, they are shown the risk factor values and corresponding 10-year SCORE2 CVD risk predictions from a sample of 17 individuals from the anonymized subset data of the Framingham Heart Study described in Section~\ref{dataset}, eight of each sex for the performance phase (P3) and one additional individual for the onboarding phase (P2). For each individual, two hypothetical scenarios are assigned for the task T2 of the performance phase (P3)  based on the treatment goals listed in the 2021 ESC CVD guidelines~\cite{cvdguidelines2021}. In particular, the first two scenarios from the following list that meet the conditions were selected: 
\begin{itemize}[noitemsep]
    \item Lower systolic blood pressure to $130\; mmHg$ (if currently $\geq 140$) or to $110\; mmHg$ (if currently between $120$ and $140$).
    \item Stop smoking (if the patient is currently smoking).
    \item Reduce non-HDL cholesterol to $3.4\;mmol\,/\,L$ (if currently $> 4$).
\end{itemize}

After removing patients who do not meet at least two hypothetical scenario conditions, individuals of each sex were selected using the Kennard \& Stone algorithm~\cite{Kennard1969} available in the \textit{prospectr} package~\cite{stevens2022} in order to obtain a representative sample. 

\subsection{Results} 
In this section, we show the main results of testing the differences between Petal-X and GSC groups. We conducted Welch's $t$-tests as the conditions sample sizes are unequal~\cite{zimmerman2004}.

\begin{figure}[h!t]
\centering
\includegraphics[width=\linewidth]{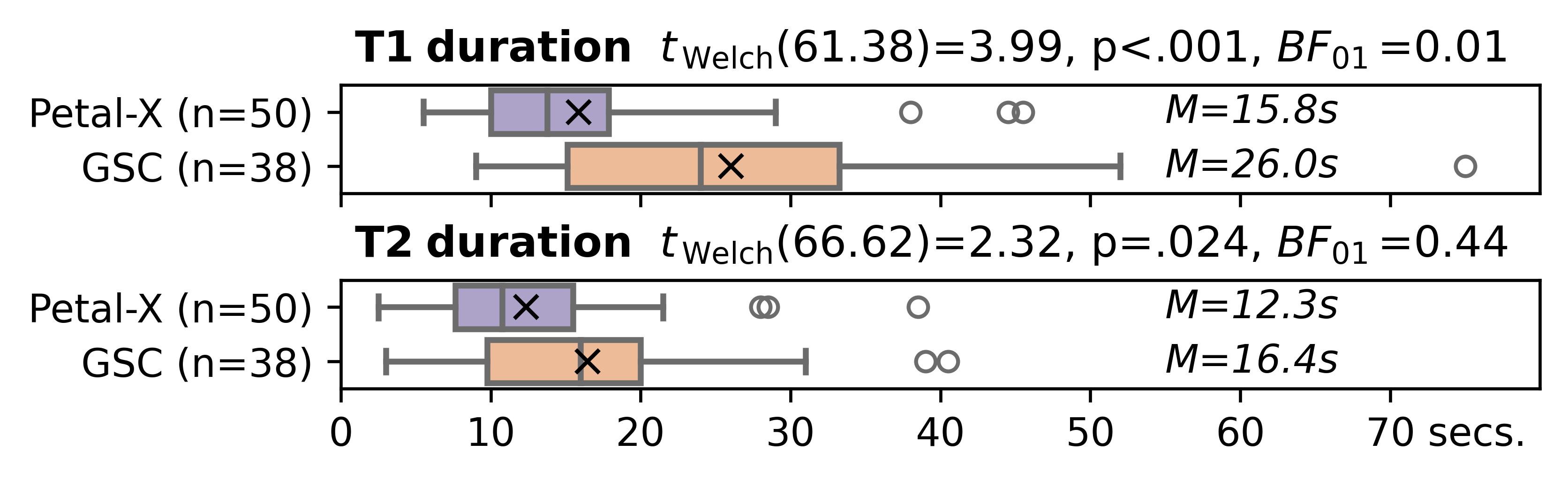}
\includegraphics[width=\linewidth]{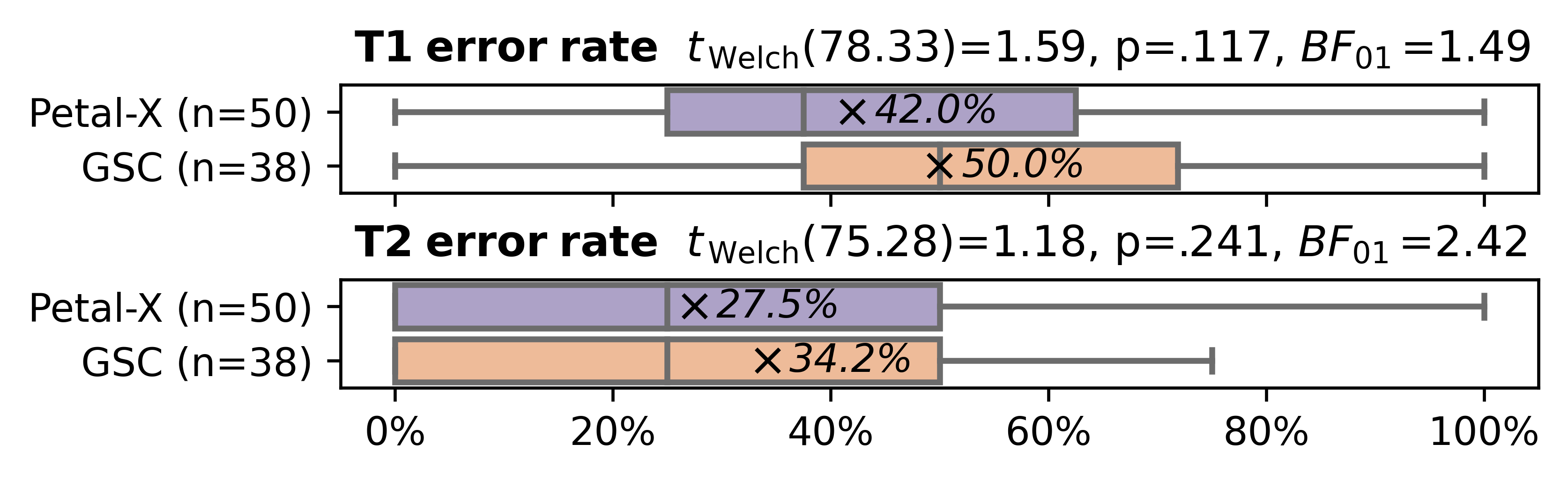}
\caption{Task T1 and task T2 performance results.} 
\label{fig:perfresults}
\end{figure}
\textbf{Task T1 and T2 performance.} The use of Petal-X led to a significantly shorter median duration of tasks T1 and T2 compared to GSC, by 10.2 and 4.1 seconds, respectively. Although the error rates in tasks T1 and T2 were also slightly lower when Petal-X was used compared to GSC, Welch's $t$ tests revealed that these effects were not statistically significant.

\textbf{CVD risk Petal-X calculation (task T0).} The mean difference in the trial-averaged absolute error between the group using Petal-X to calculate the 10-year CVD risk (M=5.37, SD=3.85) and the group merely identifying it in GSC (M=2.74, SD=3.03) is statistically significant ($t_{\mathrm{Welch}}(85.88) = -3.58$, $p<.001$).

\begin{table}[h!t]
\centering
\caption{Transparency, trust, and intent to use scales and items. The questionnaire items are seven-point Likert questions, ranging from 1~(very strongly disagree) to 7 (very strongly agree).}
\label{tab:scales}
\begin{tabular}{@{}p{0.09\linewidth}p{0.85\linewidth}@{}}
\toprule
\multicolumn{2}{l}{[\textbf{Transparency}] Perceived transparency of the system}     \\
\multicolumn{2}{l}{Cronbach's alpha $=0.79$, M $=5.01$, SD$=1.58$, range: $1.0-7.0$} \\[1pt]
\textbf{Tra1} & \textit{I understand why the system predicted the 10-year risk of cardiovascular disease it did.} \\
\textbf{Tra2} & \textit{I understand what the system bases its risk prediction on.}                               \\ \midrule
\multicolumn{2}{l}{[\textbf{Trust}] Trust on the system} \\
\multicolumn{2}{l}{Cronbach's alpha $=0.85$, M $=4.11$, SD$=1.26$, range: $1.0-7.0$} \\[1pt]
\textbf{Tru1} & \textit{I am confident in the system.} \\
\textbf{Tru2R} & \textit{The system is deceptive. (Reversed)} \\
\textbf{Tru3} & \textit{The system is reliable.} \\
\textbf{Tru4} & \textit{I can trust the system.} \\
\textbf{Tru5} & \textit{I can depend on the system.}\\ \midrule
\multicolumn{2}{l}{[\textbf{Intent to use}] Intent to use the system} \\
\multicolumn{2}{l}{Cronbach's alpha $=0.90$, M $=4.11$, SD$=1.88$, range: $1.0-7.0$} \\[1pt]
\textbf{Int1} & \textit{I would like to use the system again for similar tasks.} \\ 
\textbf{Int2} & \textit{The next time I am looking at a risk prediction I would like to use this system.} \\ \bottomrule
\end{tabular}
\end{table}

\textbf{Transparency, trust and intent to use.} We assessed the reliability of the three scales with Cronbach’s alpha. Reliability estimates, shown in \autoref{tab:scales}, support the use of the three scales, given the standard cutoff values for early-stage research (0.7) and basic research (0.8)~\cite{lance2006}.
As shown in \autoref{fig:ttiresults}, Welch’s $t$-tests revealed that although the mean ratings on the \textit{transparency}, \textit{trust} and \textit{intent to use} scales were slightly higher when GSC was used compared to Petal-X, these effects were not statistically significant. 
The Bayes factor for the same analysis revealed that the data were 4.29 (\textit{transparency}), 3.77 (\textit{intent to use}) and 2.35 (\textit{trust}) times more probable under the null hypotheses (no difference between the means of each group) compared to the alternative hypotheses. 
This can be considered, respectively, moderate evidence (in the case of \textit{transparency} and \textit{intent to use}) and anecdotical evidence (in the case of \textit{trust}) in favor of the null hypothesis\cite{jeffreys1961}.
\begin{figure}[h!t]
\centering
\includegraphics[width=\linewidth]{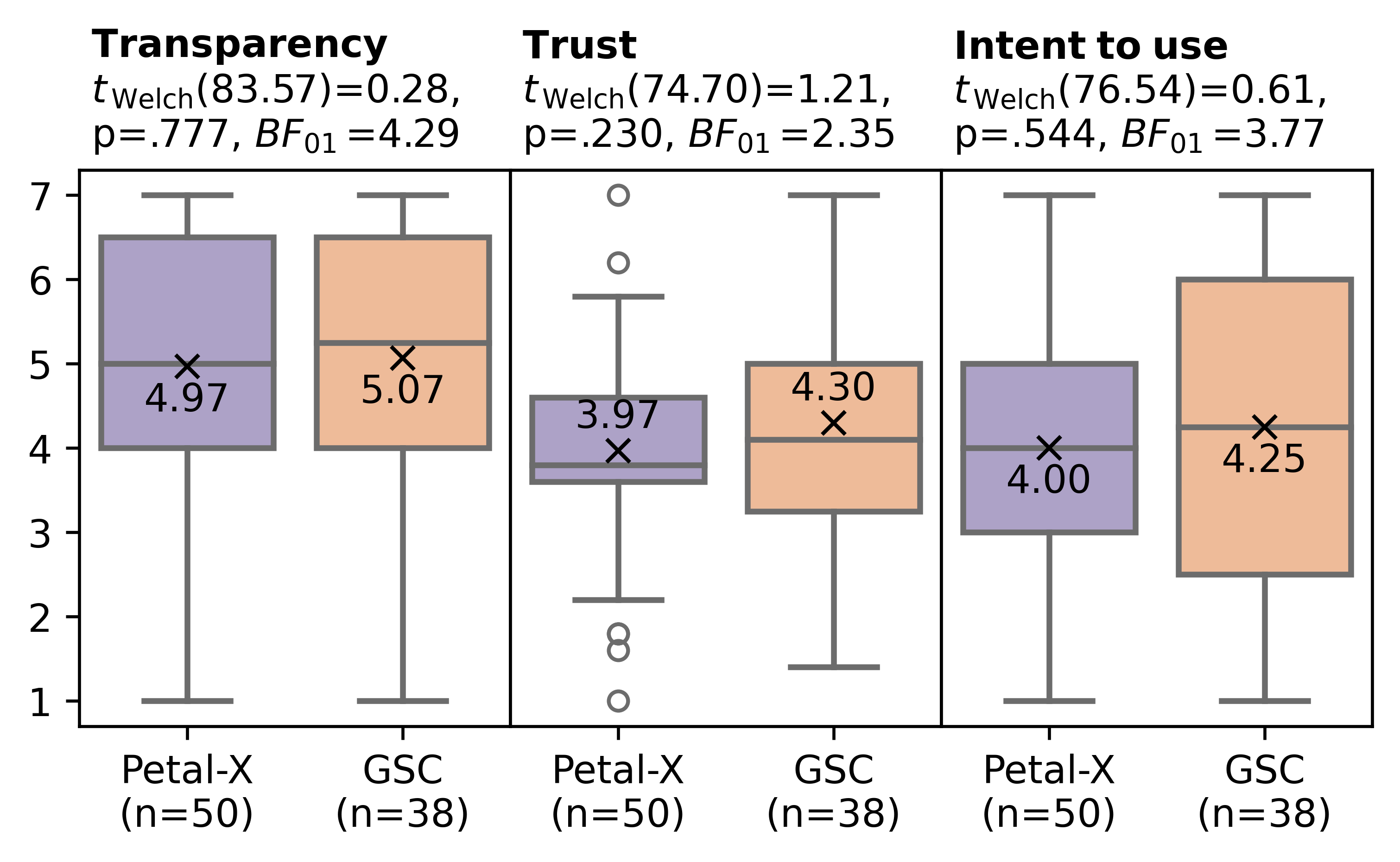}
\caption{Transparency, trust, and intent to use results.} 
\label{fig:ttiresults}
\end{figure}

\section{Discussion}\label{section:discussion} 
The results of this study reveal that Petal-X offers some performance benefits over GSCs in critical CVD risk communication challenges. Overall, healthcare students utilizing Petal-X (1) understand the contribution of each risk factor and (2) compare the projected risk reductions by risk factor modifications, both faster than those using GSCs and even lowering the error rate. These findings suggest that Petal-X's patient-centered design allows users to quickly grasp how the 10-year CVD estimate is derived from risk factors.

Petal-X would undoubtedly benefit from also numerically providing the total 10-year CVD risk and individual risk factor contributions. However, the area encoding provided by Petal Product Plots has been found outstanding for identifying risk factors of concern and may even be regarded as appropriate for computing the 10-year CVD risk, despite large variances in performance across users.

Finally, we found the evidence showing that there is little to no difference in transparency, trust, and intention to use between Petal-X and GSCs promising, especially given that these subjective impressions were formed after only brief use of Petal-X.
\section{Conclusion}
In this paper, we tackle the need for visual clinical risk explanations that target the different understanding capacities of non-ML expert audiences, such as clinicians, patients, and family members. We do so by designing and evaluating Petal-X, a tool focused on a specific but highly impactful issue: cardiovascular disease risk communication. The design process led to two intertwined contributions. First, we implemented a model-agnostic global surrogate tailored to the state-of-the-art SCORE2 model. Second, we designed Petal Product Plots, a novel visualization that mimics the shape of multilobe flower petals to propose a legible and engaging representation.

The use of model-agnostic XAI methodologies will potentially ease the adaptation of the proposed visualization to next-generation risk assessment models, which are expected to use more complex machine learning models to obtain a personalized risk prediction~\cite{tokgozoglu2021}. 

\acknowledgments{%
The authors wish to thank the participants for their time and contribution, as well as Jeroen Ooge and Alejandro Rozo for their helpful discussions. This work was supported by the Research Foundation-Flanders (FWO) grant G0A3319N and the Slovenian Research Agency grant ARRS-N2-010.
}

\bibliographystyle{abbrv-doi-hyperref}

\bibliography{main}

\end{document}